\newcommand\vldbpagestyle{plain}
\algrenewcommand\algorithmicrequire{\textbf{Input:}}
\algrenewcommand\algorithmicensure{\textbf{Output:}}
\DeclareMathSymbol{\mlq}{\mathord}{operators}{``}
\DeclareMathSymbol{\mrq}{\mathord}{operators}{`'}
\newtheorem{theor}{Theorem}
\newtheorem{exam}{Example}
\newcommand{\spara}[1]{\smallskip\noindent{\bf #1}}
\renewcommand\arraystretch{1.3}
\newcommand{\squishlist}{
 \begin{list}{$\bullet$}
  {  \setlength{\itemsep}{0pt}
     \setlength{\parsep}{3pt}
     \setlength{\topsep}{3pt}
     \setlength{\partopsep}{0pt}
     \setlength{\leftmargin}{2em}
     \setlength{\labelwidth}{1.5em}
     \setlength{\labelsep}{0.5em}
} }
\newcommand{\squishlisttight}{
 \begin{list}{$\bullet$}
  { \setlength{\itemsep}{0pt}
    \setlength{\parsep}{0pt}
    \setlength{\topsep}{0pt}
    \setlength{\partopsep}{0pt}
    \setlength{\leftmargin}{2em}
    \setlength{\labelwidth}{1.5em}
    \setlength{\labelsep}{0.5em}
} }
\newcommand{\squishdesc}{
 \begin{list}{}
  {  \setlength{\itemsep}{0pt}
     \setlength{\parsep}{3pt}
     \setlength{\topsep}{3pt}
     \setlength{\partopsep}{0pt}
     \setlength{\leftmargin}{1em}
     \setlength{\labelwidth}{1.5em}
     \setlength{\labelsep}{0.5em}
} }
\newcommand{\squishend}{
  \end{list}
}
\newcommand{\eat}[1]{}
\newcommand{\NP}{\ensuremath{\mathbf{NP}}\xspace}
\newcounter{ccc}
\DeclareMathOperator*{\argmax}{arg\,max}
\newcommand{\bigO}{\mathcal{O}}
\begin{document}
\title{Distributed Graph Embedding with Information-Oriented Random Walks}

\settopmatter{authorsperrow=4}
\author{Peng Fang}
\affiliation{%
  \institution{Huazhong University of Science and Technology}
  \country{China}
}
\email{fangpeng@hust.edu.cn}

\author{Arijit Khan}
\affiliation{%
  \institution{Aalborg University}
  \country{Denmark}
}
\email{arijitk@cs.aau.dk}

\author{Siqiang Luo}
\authornote{Corresponding author}
\affiliation{%
  \institution{Nanyang Technological University}
  \country{ Singapore}
}
\email{siqiang.luo@ntu.edu.sg}

\author{Fang Wang}
\authornotemark[1]
\affiliation{%
  \institution{Huazhong University of Science and Technology}
  \country{China}
}
\email{wangfang@hust.edu.cn}

\author{Dan Feng}
\affiliation{%
  \institution{Huazhong University of Science and Technology}
  \country{China}
}
\email{dfeng@hust.edu.cn}

\author{Zhenli Li} %
\affiliation{%
  \institution{Huazhong University of Science and Technology}
  \country{China}
}
\email{lizhenli@hust.edu.cn}

\author{Wei Yin}
\affiliation{%
  \institution{Huazhong University of Science and Technology}
  \country{China}
}
\email{weiyin_hust@hust.edu.cn}

\author{Yuchao Cao}
\affiliation{%
  \institution{Huazhong University of Science and Technology}
  \country{China}
}
\email{caoyuchao@hust.edu.cn}

\begin{abstract}
Graph embedding maps graph nodes to low-dimensional vectors, and is widely adopted in machine learning tasks.
The increasing availability of billion-edge graphs underscores the importance of learning efficient and effective embeddings on large graphs, such as link prediction on Twitter with over one billion edges. 
Most existing graph embedding methods fall short of reaching high data scalability. In this paper, we present a 
general-purpose, distributed,
information-centric random walk-based graph embedding framework, 
{\sf DistGER}, which 
can scale to embed billion-edge graphs. {\sf DistGER} incrementally computes
information-centric random walks. 
It further leverages a multi-proximity-aware,
streaming, parallel graph partitioning strategy, simultaneously achieving high local partition quality and excellent workload balancing across machines. {\sf DistGER} also improves the distributed {\sf Skip-Gram} learning model to generate node embeddings
by optimizing the access locality, CPU throughput, and synchronization efficiency.
Experiments on real-world graphs demonstrate that compared to state-of-the-art distributed graph embedding frameworks, including {\sf KnightKing}, {\sf DistDGL}, and {\sf Pytorch-BigGraph},
{\sf DistGER} 
exhibits $2.33\times$--$129\times$ acceleration, 45\% reduction in cross-machines communication,
and \textgreater 10\% effectiveness improvement in downstream tasks.
\end{abstract}


\maketitle

\pagestyle{\vldbpagestyle}
%
%

\section{Introduction}
\label{sec:intro}
Graph embedding is a widely adopted operation that embeds a node in a graph to a low-dimensional vector.
The embedding results are used in downstream machine learning tasks 
such as link prediction \cite{Link_prediction_2017},
node classification \cite{classification_2011}, clustering \cite{cluster_2017}, and recommendation \cite{recomendation_2017}.
In these applications, graphs can be huge with millions of nodes and billions of edges. For instance, the Twitter graph includes over 41 million user nodes and over one billion edges, and it has extensive requirements for link prediction and classification tasks \cite{GuptaGLSWZ13}. The graph of users and products at Alibaba also consists of more than two billion user-product edges, which forms a giant bipartite graph for its recommendation tasks \cite{WangHZZZL18}. 
A plethora of random walk-based graph embedding solutions \cite{HuGE_2021,node2vec_2016,DeepWalk_2014,Line_2015,Verse_2018}
are proposed. 
A random walk is a graph traversal that starts from a source node,
jumps to a neighboring node at each step, and stops after a few steps.
Random-walk-based embeddings are inspired by the well-known natural
language processing model, {\sf word2vec} \cite{word2vec_2013}. By conducting sufficient
random walks on graphs, substantial graph structural information is collected and fed into the
{\sf word2vec} ({\sf Skip-Gram}) to generate node embeddings. Compared with other graph embedding
solutions such as graph neural networks \cite{TuCWY018,WangC016,GraphSAGE_2017,Graph_attention_2018,Graphgan_2018}
and matrix factorization techniques \cite{ProNE_2019,NetSMF_2019,NRP_2020,QiuDMLWT18,WangCWP0Y17},
random walk-based methods are more flexible, parallel-friendly, and scale to larger graphs \cite{FlashMob_2021}.

While graph embedding is crucial, the increasing availability of billion-edge
graphs underscores the importance of scaling graph embedding.
The inherent challenge is that the number of random walks required
increases with the graph size. For example, one representative work, {\sf node2vec}
\cite{node2vec_2016} needs to sample many node pairs to ensure the embedding quality, 
it takes months to learn node embeddings for a graph with 100 million nodes and 500 million edges by 20 threads on a modern server \cite{ProNE_2019}.
Some very recent work, e.g., {\sf HuGE} \cite{HuGE_2021} attempts to improve the quality of random walks 
according to the importance of nodes. Though this method can remove redundant
random walks to a great extent, the inherent complexity remains similar.
It still requires more than one week to learn embeddings for a billion-edge Twitter graph on a modern server,
hindering its adoption to real-world applications. Another line of work turns
to use GPUs for efficient graph embedding. For example, some recent graph embedding frameworks
(e.g., \cite{GraphVite_2019, Tencent_GE_2020}) simultaneously perform graph random walks on CPUs and
embedding training on GPUs. However, as the computing power between GPUs and CPUs differ widely,
it is typically hard for the random walk procedure performed on CPUs to catch up with the embedding
computation performed on GPUs, causing bottlenecks \cite{Serafini21,ZhengCCSWLCYZ22}. Furthermore, this process is heavily related to
GPUs' computing and memory capacity, which can be drastically different across different servers.

Recently, distributed graph embedding, or computing graph embeddings with multiple machines, has attracted significant research interest
to address the scalability issue. Examples include {\sf KnightKing} \cite{KnighKing_2019}, {\sf Pytorch-BigGraph} \cite{PBG_2019},
and {\sf DistDGL} \cite{DistDGL_2020}. {\sf KnightKing} \cite{KnighKing_2019} optimizes the walk-forwarding process for {\sf node2vec} and brings up several orders of magnitude improvement compared to a single-server solution. However, it may suffer from
redundant or insufficient random walks that are attributed to a routine random walk setting, resulting in low-quality training information for the
downstream task \cite{HuGE_2021}. Moreover, the workload-balancing graph partitioning scheme that it leverages fails
to consider the randomness inherent in random walks, introducing higher communication costs across machines
and degrading its performance. Facebook proposes {\sf Pytorch-BigGraph} \cite{PBG_2019} that leverages
graph partitioning technique and parameter server to learn large graph embedding on multiple CPUs based on {\sf PyTorch}.
However, the parameter server used in this framework needs to synchronize embeddings with clients,
which puts more load on the communication network and limits its scalability.
Amazon has recently released {\sf DistDGL} \cite{DistDGL_2020}, a distributed graph embedding framework for graph neural network model.
However, its efficiency is bogged down by the graph sampling operation, e.g., more than 80\% of the overhead is for sampling in the
{\sf GraphSAGE} model \cite{GraphSAGE_2017}, and the mini-batch sampling used may trigger delays in gradient updates causing
inefficient synchronization. In conclusion, although the distributed computation frameworks have shown better performance
than the single-server and CPU-GPU-based solutions, significant rooms exist for further improvement.

{\bf Our {\sf DistGER} system.}
We present a newly designed distributed graph embedding system, {\sf DistGER},
which incorporates more effective information-centric random walks such as {\sf HuGE}~\cite{HuGE_2021}
and achieves super-fast graph embedding compared to state-of-the-arts.
As a preview, compared to {\sf KnightKing}, {\sf Pytorch-BigGraph}, and {\sf DistDGL},
our {\sf DistGER} achieves $9.3\times$, $26.2\times$, and $51.9\times$ faster embedding on average, and easily scales to
billion-edge graphs (\S \ref{sec:experiments}).
Due to information-centric random walks, {\sf DistGER} embedding also shows higher effectiveness when applied to downstream tasks.

Three novel contributions of {\sf DistGER} are as follows.
{\bf First} and foremost, since the information-centric random walk requires measuring the effectiveness of the
generated walk on-the-fly during the walking procedure, it inevitably introduces higher
computation and communication costs in a distributed setting.
{\sf DistGER} resolves this by showing that the effectiveness of a
walking path can be measured through incremental information,
avoiding the need for full-path information.
{\sf DistGER} invents incremental information-centric computing ({\sf InCoM}),
ensuring $\bigO(1)$ time for on-the-fly measurement and maintains constant-size messages
across computing machines. {\bf Second}, considering the randomness inherent in random walks
and the workload balancing requirement, {\sf DistGER} proposes multi-proximity-aware,
streaming, parallel graph partitioning ({\sf MPGP}) that is adaptive to random walk characteristics, increasing
the local partition utilization. Meanwhile, it uses a dynamic workload constraint for the partitioning
strategy to ensure load-balancing. {\bf Finally}, different from the existing random walk-based embedding techniques,
{\sf DistGER} designs a distributed {\sf Skip-Gram} learning model ({\sf DSGL}) to generate node embeddings
and implements an end-to-end distributed graph embedding system. Precisely, {\sf DSGL} leverages global-matrices and two local-buffers
for node vectors to improve the access locality, thus reducing cache lines ping-ponging across multiple cores
during model updates; then develops multi-windows shared-negative samples computation to fully exploit the CPU throughput.
Moreover, a hotness block-based synchronization mechanism is proposed to synchronize node vectors efficiently in a distributed setting.

{\bf Our contributions and roadmap.}
We propose an efficient, scalable, end-to-end distributed graph embedding system, {\sf DistGER},
which, to our best knowledge, is the first general-purpose, information-centric random walk-based
distributed graph embedding framework.

\noindent
$\bullet$ We introduce incremental information-centric computing ({\sf InCoM}) to address computation and communication
overheads due to on-the-fly effectiveness measurements during information-oriented random walks in a distributed setting (\S \ref{sec:incom}).

\noindent
$\bullet$ We propose multi-proximity-aware, streaming, parallel graph partitioning ({\sf MPGP}) that 
achieves both higher local partition utilization and load-balancing (\S \ref{sec:partition}).

\noindent
$\bullet$ We develop a distributed {\sf Skip-Gram} learning model ({\sf DSGL}) to generate node embeddings by improving
the access locality, CPU throughput, and synchronization efficiency (\S \ref{sec:learning}).

\noindent
$\bullet$ We conduct extensive experiments on five large, real-world graphs to 
demonstrate that {\sf DistGER} achieves much better efficiency, scalability, and effectiveness over existing popular distributed frameworks,
e.g., {\sf KnightKing} \cite{KnighKing_2019}, {\sf DistDGL} \cite{DistDGL_2020}, and {\sf Pytorch-BigGraph} \cite{PBG_2019}. 
In addition, {\sf DistGER} generalizes well to other random walk-based graph embedding methods (\S \ref{sec:experiments}). 

We discuss preliminaries and a baseline approach in \S \ref{sec:preliminaries}, related work in \S \ref{sec:related}, and conclude in \S \ref{sec:conclusions}. 

\section{Preliminaries and Baseline}
\label{sec:preliminaries}
We design an end-to-end distributed system for effective and scalable embedding of large graphs via random walks.
To this end, we first discuss relevant works on random walk-based sequential graph embedding (\S \ref{sec:randGembedding}) and
distributed systems for random walks on graphs (\S \ref{sec:dRand}). Then, we propose a baseline distributed system for
random walk-based graph embedding by combining the above two methods (\S \ref{sec:HUGED}), discuss its limitations
and scopes of improvements, which leads to introducing our ultimate system, {\sf DistGER}
in \S \ref{sec:DistGER} and \S \ref{sec:learning}.
Table~\ref{tab:Notations_paper} explains the most important notations.
{\sf DistGER} handles undirected and unweighted graphs by default,
but can support directed and weighted  graphs (higher edge weights imply stronger connectivity). {\sf DitsGER} uses the {\em Compressed Sparse Row} (CSR) \cite{csr_format}
format to store graph data, where directed edges are stored with their source nodes and undirected edges are stored twice for both directions.
For each weighted edge, CSR stores a tuple containing its destination node and edge weight.
\begin{table}[tb!]
\footnotesize
\centering
\begin{center}
  \caption{\small Frequently used notations}
  \label{tab:Notations_paper}
  \begin{tabular}{l||l} 
  \textbf{Notation}   & \textbf{Meaning} \\ \hline \hline
    $G = (V, E)$ & $G$: undirected, unweighted graph; $V$: set of nodes; $E$: set of edges\\
    $\varphi(u)$ & embedding or vector representation of node $u$, having dimension $d$ \\
    $w$          & window size of context in the {\sf Skip-Gram}\\
    $N(u)$       & neighbors of node $u$ \\
    $L$          & random walk length starting from a node\\
    $r$          & number of random walks per node\\
    $H(X)$       & entropy of random variable $X$ with possible values {$x_1, x_2, \ldots, x_n$} \\
\end{tabular}
\end{center}
\end{table}
\subsection{Random-walks Based Graph Embedding}
\label{sec:randGembedding}
These graph embedding algorithms are inspired by the well-known natural language processing model, {\sf word2vec} \cite{word2vec_2013}:
They transform a graph into a set of random walks through sampling methods, treat each random walk as a sentence,
and then adopt {\sf word2vec} ({\sf Skip-Gram}) to generate node embeddings from the sampled walks.

\spara{Node2vec.}
A most representative algorithm in the aforementioned category is {\sf node2vec} \cite{node2vec_2016}, as given below.

\underline{Random walk method.}
Given a graph $G=(V, E)$, two nodes $u, v \in  V$, and we suppose a walker is currently at node $u$.
{\sf Node2vec} defines the transition probability from $u$ to $v$ as $P(u,v)= \frac{\pi_{uv}}{Z}$,
%
%
where $\pi_{uv}$ is the unnormalized transition probability from $u$ to $v$, and $Z$ is the normalization constant defined as $\sum_{v\in N(u)} \pi_{uv}$.
{\sf Node2vec} defines a second-order random walk. 
Assume that a walker just traversed node $t$ and now resides at node $u$ ($u$ is a neighbor of $t$).
Next, it will select a node $v$ from $u$'s neighbors.
The un-normalized transition probability $\pi_{uv}$
is defined by $d_{tv}$, which is the shortest path distance between nodes $t$ and $v$:
If $d_{tv}$ is 0, 1, 2, respectively, then the corresponding $\pi_{uv}$ is $1/p$, $1$, $1/q$.
%
Hyperparameters $p$ and $q$ are called return and in-out parameters, respectively.
$d_{tv} = 0$ means that $t$ and $v$ are the same node, i.e., the
walker goes back to $t$, which is a BFS-like exploration, thus setting a small $p$ obtains
a ``local view" in the graph with respect to the start node.
$d_{tv} = 1$ means that $v$ is a neighbor of $t$, 
and $d_{tv} = 2$ denotes a DFS-like exploration 
to get a ``global view" in the graph, which can be attained by a small $q$.

\underline{Features learning for graph embedding.}
Features learning maps $\varphi: V \to R^d $ from nodes to feature representations (node embeddings).
Since {\sf node2vec} captures node representations based on the {\sf Skip-Gram} model \cite{word2vec_2013}
that maximizes the co-occurrence probability
between words within a window $w$ in a sentence, 
the objective is:
\begin{small}
\begin{equation}
\operatorname*{argmax}_\varphi \frac{1}{|V|}\sum\limits_{j=1}^{|V|}\sum\limits_{-w\leq i\leq w}\log{p(u_{j+i}| u_j)}
\label{skim_gram_eq}
\end{equation}
\end{small}
The generated walks are used as a corpus with vocabulary $V$,
where $u_{j+i}$ denotes a context node in a window $w$, and $p(u_{j+i}| u_j)$ indicates the probability to predict the context node.
The basic {\sf Skip-Gram} formulates $p(u_j| u_{j+i})$ as the softmax function.
%
%
%
%
%
Existing methods generally
speed-up training 
with negative sampling \cite{negative_sampling_2013}.
\begin{small}
\begin{equation}
\label{negative_sampling_Eq}
\begin{aligned}
\log p(u_j| u_{j+i}) &\approx \log\sigma(\varphi_{in}(u_{j+i})\cdot\varphi_{out}(u_{j})) \\
&+\sum\limits_{k=1}^{K}\mathbb{E}_{u_k \sim Pn(u)}[\log\sigma(-\varphi_{in}(u_{j+i})\cdot\varphi_{out}(u_{k}))]
\end{aligned}
\end{equation}
\end{small}
Here, $\sigma(x)=\frac{1}{1+exp(-x)}$ is the sigmoid function, and the expectations
are computed by drawing random nodes from a sampling distribution $Pn(u)$, $\forall u \in V$.
Typically, the number of negative samples $K$ is much smaller than $|V|$ (e.g., $K\in[5,20]$).

\underline{Complexity analysis.}
Assume that the number of walks per node is $r$, walk length $L$, embedding dimensions $d$, window size $w$,
and the number of negative samples $K$. The time complexity of {\sf node2vec} random-walk procedure
is $\bigO(r\cdot L\cdot |V|)$.
For feature learning, the corpus size $C = r\cdot L$. Let us denote the complexity of the unit operation of predicting and updating one node's
embedding as $o$. The {\sf Skip-Gram} with the negative sampling only needs $K+1$ words to obtain a probability distribution
(Eq.~\ref{negative_sampling_Eq}), thus the time complexity of
{\sf node2vec} feature learning is $\bigO(C \cdot w \cdot (K+1) \cdot o)$. 
Since each node in the {\sf Skip-Gram} model needs to maintain two embeddings $\varphi_{in}$ and $\varphi_{out}$
for the parameter updates, the space complexity of {\sf node2vec}, which refers to the parameter sizes, is $O(|V|d)$.

\underline{Drawbacks.}
Despite the flexibility in exploring node representations (local-view vs. global-view),
{\sf node2vec} incurs high time overhead. 
It leverages a routine random walk configuration (usually, $L$=80 and $r$=10)
to generate walks, similar to most existing random walk-based graph embedding methods, which limits the efficiency and scalability on large-scale graphs. Indeed, this {\em one-size-fits-all}
strategy cannot meet the specific requirements of different real-world graphs. 
For instance, the high-degree nodes are usually located in dense areas of a graph, they might require longer and more random walks to capture more comprehensive features; while for the low-degree nodes, if treated equally, it may introduce redundancy into generated walks, thus limiting the scalability.
%

\spara{HuGE.}
The recent work, {\sf HuGE} \cite{HuGE_2021} attempts to resolve the routine random walk issue of {\sf node2vec} and proposes a novel information-oriented random walk mechanism to achieve a concise and comprehensive representation in the sampling procedure.

\underline{Random walk method.}
First, {\sf HuGE} leverages a hybrid random walk strategy, which considers both node degree and the number of common neighbors in each walking step. 
Common neighbors represent potential information between nodes, e.g., node similarity \cite{ Line_2015}. 
For random walks, high-degree nodes are revisited more, and walks starting from them can obtain richer information by traveling around their local neighbors \cite{ common_neighbor_aware_icde_2019}.
The un-normalized transition probability from node $u$ to the next-hop node $v$ is:
\begin{small}
\begin{equation}
\alpha(u,v) = \frac{1}{deg(u)-Cm(u, v)} \times \max \left\{\frac{deg(u)}{deg(v)}, \frac{deg(v)}{deg(u)} \right\}
\label{accept_CNHRW}
\end{equation}
\end{small}
where $deg(u)$ is the degree of $u$, and $Cm(u, v)$ denotes the number of their common neighbors. Thus, $\frac{1}{deg(u)-Cm(u, v)}$ indicates the similarity between the current node $u$ and the next-hop node $v$, the ratio grows with higher $Cm(u, v)$, since $deg(u)$ is fixed. The $\max$ function assigns a weight to the transition probability from $u$ to $v$, indicating the influence of a high degree node on its neighbors.
%

At the current node $u$, {\sf HuGE} randomly chooses $v$ from $N(u)$ as a candidate node, the acceptance probability for $v$ as the next-hop node is $P(u,v)$, and if $v$ is rejected, which happens with probability $1-P(u,v)$, the walker backtracks to $u$ and repeats a random selection again from $N(u)$, known as the {\em walking-backtracking} strategy \cite{ common_neighbor_aware_icde_2019}. $P(u,v)$ is defined as $Z\left(\alpha(u,v)\right)$,
%
%
where {\sf HuGE} normalizes $\alpha(u,v)$ via $Z(x) = \frac{e^x-e^{-x}}{e^x+e^{-x}}$, which is widely-applied in machine learning. 
For edge weight $w(u,v)$, we define $P(u,v)=Z\left(\alpha(u,v) \cdot w(u, v)\right)$.

Second, in contrast to a one-size-fits-all strategy, {\sf HuGE} proposes a heuristic walk length strategy to measure the effectiveness of information during walk based on entropy ($H$).
Mathematically, let us denote the random walk starting at the source node $u$ as $W_u^L = \{ v_u^1, v_u^2, v_u^3, \ldots,v_u^L \}$, where $v_u^k$ denotes the $k$-th node on the walk. The probability of
the occurrence of a specific node $v$ on the walk is $\frac{n(v)}{L}$, where $n(v)$ is the number of occurrences of $v$ on the generated walk. The information entropy of the generated walk is:
\begin{small}
\begin{equation}
H\left(W_u^L\right) = -\sum_{v \in W_u^L}\frac{n(v)}{L}\log\frac{n(v)}{L}
\label{path_entropy}
\end{equation}
\end{small}
With increasing $L$, as the occurrence probability for a specific node in a generated walk gradually stabilizes, $H\left(W_u^L\right)$ initially grows with $L$ until it converges. {\sf HuGE} characterizes the correlation between $H\left(W_u^L\right)$ and $L$ by linear regression and calculates the coefficient of determination ($R^2$) to determine the termination of a random walk.
\begin{footnotesize}
\begin{equation}
R\left(H(W_u^L),L\right) = \frac{\sum_{i=1}^{L^*}\left(H\left(W_u^{L(i)}\right)-\overline{H(W_u^L)}\right)\left(L(i)-\overline{L}\right)}{\sqrt{\sum_{i=1}^{L^*}\left(H\left(W_u^{L(i)}\right)-\overline{H(W_u^L)}\right)^2}\sqrt{\sum_{i=1}^{L^*}\left(L(i)-\overline{L}\right)^2}}
\label{path_corr}
\end{equation}
\end{footnotesize}
$\overline{H\left(W_u^L\right)}$ and $\overline{L}$ are the mean of the respective series for $1 \leq i\leq L^*$, and $L^*$ is the optimal walk length for the current walk. As $L$ grows and $H(W_u^L)$ stabilizes, $R^2{(H,L)}$ also decreases and converges to 0, since their linear correlation diminishes. {\sf HuGE} sets  $R^2(H,L) < \mu$ as the walk termination condition. 
Setting a smaller $\mu$ generates longer walks, 
introducing redundant information; while too large $\mu$ may not ensure good coverage of graph properties during sampling, since too short walks are generated. Based on our experimental results, good quality walk lengths are attained with $\mu=0.995$.

Third, {\sf HuGE} also proposes a heuristic number of walks strategy. The corpus is generated by multiple ($r$) random walks from each node. Following \cite{DeepWalk_2014}, if the degree distribution of a connected graph follows power-law, the frequency in which nodes appear in short random walks will also follow a power-law distribution. Inspired by this observation, {\sf HuGE} empirically analyzes the similarity between the two distributions via relative entropy. 
Formally, the node degree distribution is expressed as $p(v) = \frac{deg(v)}{\sum_{v\in V}deg(v)}$. We denote the number of occurrences of $v$ in the generated corpus as $ocn(v)$. The probability distribution for such appearances in the corpus is $q(v) = ocn(v)/\sum_{v\in V}ocn(v)$. The relative entropy from $p$ to $q$ is:
\begin{small}
\begin{equation}
\begin{aligned}
D(p\rVert q)
&=\sum_{i=1}^{r^*}\frac{deg(v)}{\sum deg(v)}\log\frac{{deg(v)}{\sum ocn(v)}}{{ocn(v)}{\sum deg(v)}}
\end{aligned}
\end{equation}
\end{small}
Here, $r^*$ is the optimal number of walks from a source node. With increasing
$r$, the difference $D_r(p \rVert q)$ gradually converges, which means that the probability distribution of nodes' occurrences in the generated corpus has stabilized.
\begin{small}
\begin{equation}
\Delta D_r(p \rVert q)= |D_r(p\rVert q)-D_{r-1}(p\rVert q)|
\end{equation}
\end{small}

{\sf HuGE} leverages $\Delta D_r(p \rVert q) \leq \delta$ as the termination condition. Based on our experimental results, $\delta=0.001$ usually produces
a good number of random walks per source node.
%
%

\underline{Features learning for graph embedding.} The features learning uses the {\sf Skip-Gram} model, and similar to {\sf node2vec}, 
follows Eq.~\ref{skim_gram_eq} and~\ref{negative_sampling_Eq}.

\underline{Complexity analysis.}
The time complexity of {\sf HuGE} random-walk procedure is $\bigO(r'\cdot L'\cdot |V|)$, where
the optimal number of walks per node is $r'$ (decided by $\Delta D_r(q \rVert p) \leq \delta$) and
the average walk length is $L'$ (decided by $R^2(H,L) < \mu$ for each walk).
However, the complexity of measuring $H(W)$ and $R\left(H(W),L\right)$ at each step of a walk
is $\bigO(L)$, where $L$ is the current walk length.
Thus, the overall computational workload of {\sf HuGE}
becomes quadratic in the walk length, though the average walk length
can be smaller than that in {\sf Node2Vec}.
The time complexity of the feature learning phase remains $\bigO(C' \cdot w \cdot (K+1) \cdot o)$,
but the average corpus size $C'=r'\cdot L'$ is smaller, since generally $r'<r$ and $L'<L$.


\underline{Drawbacks.} The workload of {\sf HuGE} is quadratic in the walk length. Moreover, {\sf HuGE} \cite{HuGE_2021} embedding method is sequential, and there is yet no end-to-end distributed system to support graph embedding via information-oriented random walks. Being sequential, {\sf HuGE} requires more than one week to learn embeddings for a billion-edge Twitter graph on a modern server.
\subsection{Distributed Random Walks on Graphs}
\label{sec:dRand}
{\sf KnightKing} \cite{KnighKing_2019} is a recent general-purpose, distributed graph random walk engine. Its key components are introduced below.
\begin{algorithm} [tb!]
\footnotesize
\caption{\small {\sf HuGE-D} walking procedure}
\label{HuGE-D_walk}
\begin{algorithmic}[1]
\Require current node $u$, candidate node $v$, Walker $W$, {\sf HuGE} parameter $\mu$
\Ensure walker state updates
{\flushleft{{\bf sendStateQuery($u$, $v$, $W$)}}} 
\State{$P(u,v) = Z\left(\frac{1}{deg(u)-Cm(u, v)}\cdot \max \left\{\frac{deg(u)}{deg(v)}, \frac{deg(v)}{deg(u)} \right\} \right)$} // Eq.~\ref{accept_CNHRW}

{\flushleft{{\bf getStateQueryResult($W, P(u,v)$)}}} 
\State{generate a random number $\eta \in\left[0,1\right]$}
\If{$P(u,v)> \eta $}
\State{$W.path$.append($v$), $W.cur$ = $v$, $W.steps$ ++}
\State{$L$ = $W.steps$}
\State compute $H(W)$ and $R\left(H(W),L\right)$ // Eq.~\ref{path_entropy}, \ref{path_corr}
\If{$R^2(H(W),L) < \mu$}
\State terminate the walk
\Else
\State{generate another candidate node $t$ of $v$}
\State{sendStateQuery($v$, $t$, $W$)}
\EndIf
\Else
\State{backtrack to $u$ and generate another candidate node $v'$ of $u$}
\State{sendStateQuery($u$, $v'$, $W$)}
\EndIf
\end{algorithmic}
\end{algorithm}

\underline{Walker-centric programming model.}
For higher-order walks (e.g., second-order random walks in {\sf node2vec}),
{\sf KnightKing} assumes a walker-centric view. 
{\sf KnightKing} implements each step of 
walks by two rounds of message passing,
one round for walkers to submit the walker-to-node query messages
to check the distance between the previous node $W.prev$ and the candidate node $v$ 
and to generate the un-normalized transition probability $\pi_{uv}$;
another round of message passing
returns results to walkers about their states, and then
the walkers decide the next steps based on the sampling outcome.

{\sf KnightKing} coordinates many walkers simultaneously
based on the Bulk Synchronous Parallel ({\sf BSP}) model \cite{BSP_1990}. Walkers are assigned to some computing machine/thread. 
{\sf KnightKing} leverages rejection sampling to 
eliminate the need of scanning all out-edges at the walker's current node.  
Suppose a walker is currently at node $u$, the sampling method generates a candidate node $v$ (a neighbor of $u$) with a transition probability $p(u,v)$. The key idea of rejection sampling is to find an envelop $Q(u)=max(\frac{1}{p}, 1, \frac{1}{q})$: 
Within the rectangular area covered by the lines $y =Q(u)$ and $x =|N(u)|$,
it randomly samples a location $(x, y)$ from this area,  if $p(u,v) \geq y$, $v$ is accepted as a successfully sampled node,
otherwise, $v$ is rejected, and the method conducts more sampling trials until success. 
%
\begin{figure*}
   \centering
   \includegraphics[width= 6 in]{./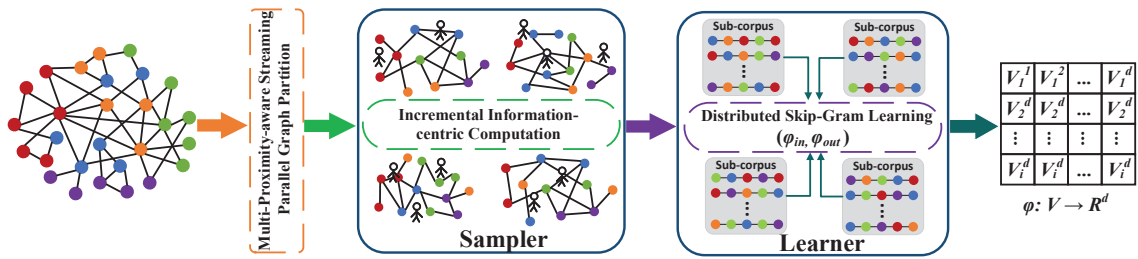}
   \caption{\small The workflow of our proposed system: {\sf DistGER}}
   \label{HuGER_framework}
 \end{figure*}

\underline{Workload-balancing graph partition.}
{\sf KnightKing} adopts a node-partitioning scheme -- each node (together with its edges)
is assigned to one computing machine in a distributed setting.
It roughly estimates the workload as the sum of the number 
of edges in each
computing machine, and ensures a balance of workloads across computing machines by
appropriately distributing the nodes. 
%
%

\underline{Complexity analysis.} For walk forwarding, 
each trial of rejection sampling 
needs $\bigO(1)$ time. 
With a reasonable $Q(u)$, there is a good chance for the sampling to succeed within a few trials,
thus the walk forwarding computation for one step can
be achieved in near $\bigO(1)$ complexity.
For communication, each message consists of
[$walk\_id$, $steps$, $node\_id$, $previous\_node\_id$] for {\sf node2vec}.
When a walk crosses a computing machine, it sends $M(1)$, i.e.,
one constant-length message to another machine. In a distributed environment,
assuming $P$ processors and the network bandwidth $B$, as analyzed in {\sf node2vec},
the total workload for {\sf KnightKing} is $\bigO(r\cdot L\cdot |V|)$, with near $\bigO(1)$ computation
complexity for each step. Thus, the average time spent in each processor is
$\bigO(r\cdot L\cdot |V|/P)$. The communication cost is $\bigO(N\cdot M(1)/B)$,
where $N$ is the count of cross-machine messages.
Thus, the time spent for {\sf KnightKing} 
is: $\bigO(r\cdot L\cdot |V|/P + N\cdot M(1)/B)$.

\underline{Drawbacks.} 
{\sf KnightKing}
provides distributed system support for traditional random walks, e.g., the one in {\sf node2vec}. For information-oriented random walks in {\sf HuGE}, the walkers need to additionally maintain the generated walking path at each step, 
and thus the message requires to carry the path information. 
The message length increases with the length of the walks, thus the efficiency of {\sf KnightKing} reduces due to extra overheads of computation and communication (elaborated in \S \ref{sec:HUGED}). 
The purely load-balancing partition scheme in {\sf KnightKing} also introduces high communication cost. 
%
\subsection{Baseline: HuGE-D}
\label{sec:HUGED}
Our distributed baseline approach, {\sf HuGE-D} 
replaces the traditional random walking method in {\sf KnightKing} with the information-oriented scheme of {\sf HuGE},
and leverages a full-path computation mechanism. 
%
%

\underline{Full-path computation mechanism.}
As previously stated, to meet the information measurement requirements,
{\sf HuGE-D} stores the full-path in the message, in addition to the necessary fields
such as $walk\_id$, $steps$, and $node\_id$. Algorithm~\ref{HuGE-D_walk} shows the
walking procedure, where we retain the {\em walking-backtracking} strategy of
{\sf HuGE}, which is similar to rejection sampling in {\sf KnightKing}.

\underline{Complexity analysis.}
{\sf HuGE-D} measures the information effectiveness of generated walks at each step.
The complexity of measuring $H(W)$ and $R\left(H(W), L\right)$ at each step
is $\bigO(L)$, where $L$ is the current walk length. For the communication cost due to messages, {\sf HuGE-D}
additionally requires carrying the generated walking path information in contrast to {\sf KnightKing},
so the message cost is also linear in the walk length $L$, which we denote as $M(L)$.
Similar to {\sf KnightKing}, since the total workload is related to the average walk length $L'$
and the optimal number of walks per node $r'$, the time spent for {\sf HuGE-D} in distributed setting is:
$\bigO(r'\cdot (L')^2\cdot |V|/P + N\cdot M(L')/B)$,
where $P$, $B$, and $N$ are the number of processors, network bandwidth,
and the count of cross-machine messages, respectively.
%

\underline{Drawbacks.}
{\bf (1) Computation:} 
{\sf HuGE-D} measures the effectiveness of generated walks at each step, which
has $\bigO(L)$ complexity, where $L$ is the current walk length. Thus, unlike {\sf KnightKing}, the computational workload of {\sf HuGE-D}
is quadratic in walk length.
%
{\bf (2) Communication:} {\sf KnightKing} only sends constant-length messages, e.g.,
for {\sf node2vec} each message has [$walk\_id$, $steps$, $node\_id$, $prev\_node\_id$],
but the messages in {\sf HuGE-D} carry the full-path
information (i.e., [$walk\_id$, $steps$, $node\_ id$, $path\_info$])
for information measurements, thus the message cost is linear in the walk length.
{\bf (3) Partitioning:} The workload-balancing partition in {\sf KnightKing}
fails to consider the large amount of cross-machine communications introduced by the randomness inherent in random walks.
%
%

\section{The Proposed System: D\lowercase{ist}GER}
\label{sec:DistGER}
To address the aforementioned computing and communication challenges of the baseline {\sf HuGE-D} (\S \ref{sec:HUGED}),
we ultimately design an efficient and scalable distributed information-centric random walks engine,
{\sf DistGER}, aiming to provide an end-to-end support for distributed random walks-based graph embedding
(Figure \ref{HuGER_framework}).
We discuss our distributed information-centric random walk component ({\sf sampler}) in \S \ref{sec:incom}
and multi-proximity-aware, streaming, parallel graph partitioning scheme ({\sf MPGP}) in \S \ref{sec:partition},
while our novel distributed graph embedding ({\sf learner})
is given in \S \ref{sec:learning}.
 Besides providing systemic support for the information-oriented method {\sf HuGE},
{\sf DistGER} can also extend its information-centric measurements to
traditional random walk-based approaches via the general API to get rid of their routine configurations (\S \ref{sec:generality}).
\subsection {Incremental Information-centric Computing}
\label{sec:incom}
{\sf DistGER} introduces incremental information-centric computing ({\sf InCoM})
to reduce redundant computations and the costs of messages. Recall that the baseline {\sf HuGE-D} computes $H(W)$ and $R(H(W), L)$ via the full-path computation mechanism to measure the information effectiveness of a walk $W$ at each step, requiring $\bigO(L)$ time at every step of the walk.
We show that instead it is possible to update $H(W)$ and $R(H(W),L)$ incrementally with $\bigO(1)$
time cost at every step of the walk. We store local information about ongoing walks
at the respective machines, which reduces cross-machine message sizes.

\underline{Incremental computing of walk information.}
Each computing machine stores a partition of nodes from the input graph.
For every ongoing walk $W$, each machine additionally maintains
in its {\em local frequency list} the set of nodes
$v$ present in the walk which are also local to that machine, together with their number of occurrences $n(v)$ on the walk.
When $W$ terminates, the local frequency list for $W$ is no longer required and is deleted.
Theorem~\ref{th:incom_walk} summarizes our incremental computing of walk information.
%
\begin{theor}
\label{th:incom_walk}
Consider an ongoing walk $W^L$ with the current length $L\geq 0$, the next accepted node to be added in $W^L$ is $v$, and
$n(v)\geq 0$ is the number of occurrences of $v$ in the walk. In addition to $v$, both $L$ and $n(v)$ would
increase by 1. For clarity, we denote $n(v)$ in $W^L$ and $W^{L+1}$ as
$n^L(v)$ and $n^{L+1}(v)$, respectively. The information entropy $H\left(W^{L+1}\right)$ is related to $H\left(W^L\right)$ as follows.
\begin{small}
\begin{eqnarray}
H\left(W^{L+1}\right) &=& \frac{H\left(W^{L}\right)\times L - \log T}{L+1} \nonumber \\
\text{where} \,\,\, T &=& \left\{
\begin{aligned}
&\frac{L^L}{(L+1)^{L+1}}\cdot\frac{\left(n^{L+1}(v)\right)^{n^{L+1}(v)}}{\left(n^L(v)\right)^{n^L(v)}}, \,\, if\ v \in W^L\\
&\frac{L^L}{(L+1)^{L+1}}, \,\, if\ v \not\in W^L
\end{aligned}
\right.
\label{eq:incom_walk}
\end{eqnarray}
\end{small}
\end{theor}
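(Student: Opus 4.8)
The plan is to unfold the definition of the walk entropy in Eq.~\ref{path_entropy} and isolate the single term that changes when one more node is appended to the walk. Throughout I adopt the usual conventions $0\log 0 = 0$ and $0^{0}=1$, so that the empty walk ($L=0$) and the ``fresh node'' case are handled uniformly. First I would rewrite, for a walk $W^{L}$ of length $L$,
\begin{small}
\begin{equation*}
L\cdot H\!\left(W^{L}\right) \;=\; -\sum_{v\in W^{L}} n^{L}(v)\log\frac{n^{L}(v)}{L} \;=\; L\log L \;-\; \sum_{v\in W^{L}} n^{L}(v)\log n^{L}(v),
\end{equation*}
\end{small}
where the last step uses $\sum_{v\in W^{L}} n^{L}(v) = L$. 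Setting $S^{L}=\sum_{v\in W^{L}} n^{L}(v)\log n^{L}(v)$, this is the compact identity $L\,H(W^{L}) = L\log L - S^{L}$, and likewise $(L+1)\,H(W^{L+1}) = (L+1)\log(L+1) - S^{L+1}$.

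Next I would describe precisely how $S^{L}$ changes. When the accepted node $v$ is appended, only the summand for $v$ is affected: its multiplicity rises from $n^{L}(v)$ to $n^{L+1}(v) = n^{L}(v)+1$, while every other node keeps its count; this is exactly why the update can be done in $\bigO(1)$ time without rescanning the path. Hence
\begin{small}
\begin{equation*}
S^{L+1} \;=\; S^{L} \;+\; n^{L+1}(v)\log n^{L+1}(v) \;-\; n^{L}(v)\log n^{L}(v),
\end{equation*}
\end{small}
which also covers $v\notin W^{L}$ by reading $n^{L}(v)=0$ so that the subtracted term vanishes, matching the second branch of Eq.~\ref{eq:incom_walk} under the convention $\bigl(n^{L}(v)\bigr)^{n^{L}(v)}=0^{0}=1$.

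Then the proof is pure substitution. Plugging $S^{L} = L\log L - L\,H(W^{L})$ into $(L+1)H(W^{L+1}) = (L+1)\log(L+1) - S^{L+1}$ gives
\begin{small}
\begin{equation*}
(L+1)\,H\!\left(W^{L+1}\right) \;=\; L\,H\!\left(W^{L}\right) \;+\; \Bigl[(L+1)\log(L+1) - L\log L\Bigr] \;-\; \Bigl[n^{L+1}(v)\log n^{L+1}(v) - n^{L}(v)\log n^{L}(v)\Bigr].
\end{equation*}
\end{small}
I would then recognise the two bracketed quantities together as $-\log T$, where $T = \frac{L^{L}}{(L+1)^{L+1}}\cdot\frac{(n^{L+1}(v))^{n^{L+1}(v)}}{(n^{L}(v))^{n^{L}(v)}}$ (collapsing to $\frac{L^{L}}{(L+1)^{L+1}}$ when $v\notin W^{L}$), since $-\log T = -L\log L + (L+1)\log(L+1) - n^{L+1}(v)\log n^{L+1}(v) + n^{L}(v)\log n^{L}(v)$. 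Dividing through by $L+1$ yields Eq.~\ref{eq:incom_walk}.

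The computation is entirely elementary, so I do not expect a genuine obstacle; the only points needing care are (i) the bookkeeping that exactly one multiplicity changes --- the crux of the $\bigO(1)$ claim --- and (ii) the boundary conventions $0\log 0 = 0$ and $0^{0}=1$, which are what let the two cases $v\in W^{L}$ and $v\notin W^{L}$ be written uniformly and what keep the identity valid already at $L=0$. I would therefore state these conventions at the outset so that the case split in the theorem statement reads as mere convenience rather than a true dichotomy; a minor cosmetic point is to fix the logarithm base once and for all, but since it cancels in $H(W^{L+1})$ the choice is immaterial.
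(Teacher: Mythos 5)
Your proof is correct and is essentially the paper's argument written additively: the paper exponentiates the entropy to get the product identity $L^{L}\cdot 2^{-H(W^{L})\cdot L}=\prod_{u\in W^{L}}\bigl(n^{L}(u)\bigr)^{n^{L}(u)}$, compares it with the analogous identity for $W^{L+1}$ (where only the factor for $v$ changes), and reads off $T$ --- which is exactly your identity $L\,H(W^{L})=L\log L-S^{L}$ with the single-term update of $S^{L}$, after taking logarithms. Your explicit statement of the conventions $0\log 0=0$ and $0^{0}=1$ is a minor tidiness gain over the paper, which handles the case $v\notin W^{L}$ only by remarking that it ``can be derived similarly.''
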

%
\begin{proof}
From the definition of entropy of $W^L$ in Eq.~\ref{path_entropy}, we get:
\begin{footnotesize}
\begin{eqnarray}
\displaystyle H\left(W^L\right) &=& -\sum_{u \in W^L}\frac{n^L(u)}{L}\log\frac{n^L(u)}{L} \nonumber 
\end{eqnarray}
\end{footnotesize}
\begin{footnotesize}
\begin{eqnarray}
\displaystyle\implies  2^{-H\left(W^L\right)\cdot L} &=& 2^{\sum_{u \in W^L}n^L(u)\cdot\log\frac{n^L(u)}{L}}
=\frac{\prod_{u \in W^L}{\left(n^L(u)\right)}^{n^L(u)}}{L^L} \nonumber
\end{eqnarray}
\end{footnotesize}
Assuming that $v$ is in $W^L$,
\begin{footnotesize}
\begin{eqnarray}
\displaystyle L^L \cdot 2^{-H\left(W^L\right)\cdot L}&=&\left({n^{L}(v)}\right)^{n^{L}(v)}\cdot\prod_{u \in W^L\setminus\{v\}}\left({n^L(u)}\right)^{n^L(u)}
\label{eq:WL}
\end{eqnarray}
\end{footnotesize}
Analogously, for $W^{L+1}$ which is extended from $W^L$ by appending $v$, and assuming that $v$ is in $W^L$, we derive:
\begin{footnotesize}
\begin{eqnarray}
\label{eq:WL1}
\displaystyle (L+1)^{L+1} \cdot 2^{-H\left(W^{L+1}\right)\cdot (L+1)}&=&\left({n^{L+1}(v)}\right)^{n^{L+1}(v)}\cdot\prod_{u \in W^{L}\setminus\{v\}}\left({n^{L}(u)}\right)^{n^{L}(u)} \nonumber \\
\end{eqnarray}
\end{footnotesize}
Comparing Eq.~\ref{eq:WL} and \ref{eq:WL1}, we get:
\begin{footnotesize}
\begin{eqnarray}
\displaystyle 2^{H\left(W^L\right)\cdot L-H\left(W^{L+1}\right)\cdot (L+1)} &=& \frac{L^L}{(L+1)^{L+1}}\cdot\frac{\left(n^{L+1}(v)\right)^{n^{L+1}(v)}}{\left(n^L(v)\right)^{n^L(v)}} = T
\end{eqnarray}
\end{footnotesize}
The proof when $v \not \in W^L$ can be derived similarly.
\end{proof}
%
\underline{Incremental computing for walk termination.}
To terminate a random walk, {\sf HuGE} computes and verifies the
linear relation between information entropy $H$ and walk length $L$ at every step of the walk.
From Eq. \ref{path_corr}, $R(H,L)$ can be expressed as:
\begin{small}
\begin{equation}
\label{correlation}
\displaystyle R(H,L) = \frac{E(HL)-E(H)E(L)}{\sqrt{{(E(H^2)-E(H)^2)}{(E(L^2)-(E(L)^2)}}}
\end{equation}
\end{small}
The mean function $E($ $)$ can be computed incrementally.
\begin{footnotesize}
\begin{eqnarray}
\label{correlation_inc}
&& \displaystyle E_p(X)=\frac{1}{p}\sum\limits_{i=1}^{p}X_i = \left(\frac{p-1}{p}\right)E_{p-1}(X)+\frac{X_p}{p} \\
&& \displaystyle E_p(XY)=\frac{(p-1)^2E_{p-1}(XY)+(p-1)[X_pE_{p-1}(Y)+Y_pE_{p-1}(X)]+X_pY_p}{p^2} \nonumber
\end{eqnarray}
\end{footnotesize}

\underline{Message size reduction.}
Due to incremental computation, instead of full-path information,
only constant-length messages need to be sent across computing machines: [$walker\_id$, $steps$, $node\_id$, $H$, $L$, $E(H)$, $E(L)$, $E(HL)$, $E(H^2)$, $E(L^2)$].
\begin{figure}
  \centering
  \includegraphics[width= 3.2 in]{./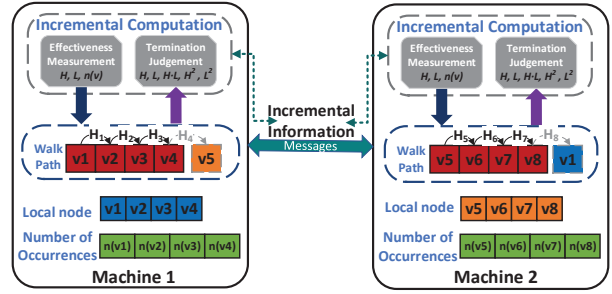}
  \caption{\small Incremental computing for information-centric random walk}
  \label{InCoM_computation}
\end{figure}
\begin{exam}
\label{ex:incom}
Figure~\ref{InCoM_computation} exhibits incremental information-centric computing ({\sf InCoM}).
The graph is partitioned into two machines: $\{v_1, v_2, v_3, v_4\}$ in $M_1$ and
$\{v_5, v_6, v_7, v_8\}$ in $M_2$.
A local frequency list is maintained for each ongoing walk at every machine,
having \# occurrences information only about nodes that are local to a machine.
When a local node is added to the walk, \# occurrences for this node
is updated in this local list. Using the local frequency list, {\sf DistGER}
incrementally computes information entropy $H$, as well as the linear relation $R$ between
$H$ and the current path length $L$ to decide on walk termination.
If the next accepted node is not at the current machine, the walker only needs to
carry the necessary incremental (constant-length) information as a message including
$walker\_id$, $steps$, $node\_id$, $H$, $L$, $E(H)$, $E(L)$, $E(HL)$, $E(H^2)$, and $E(L^2)$
to the other machine, and then generate $H$ and $R$ in that machine. Given an 8 bytes
space to store a variable, since the messages in baseline {\sf HuGE-D} carry the full-path information (i.e., [$walk\_id$, $steps$, $node\_id$, $path\_info$]), {\sf HuGE-D} needs $24 + 8L$ bytes per message, where $L$ is the walk length, while {\sf DistGER} only requires the constant size of 80 bytes. If the maximum path length is 80 (commonly used), one message in {\sf DistGER} is up to 8.3$\times$ smaller than that in {\sf HuGE-D}.
\end{exam}

\underline{Complexity analysis.}
The pseudocode of {\sf InCoM}
remains similar to that in Algorithm~\ref{HuGE-D_walk}, except that $H$ and $R$ are
computed incrementally in Line 6, via Eq. \ref{eq:incom_walk} and \ref{correlation_inc},
and we require only $\bigO(1)$ time at each step of the walk. Furthermore, when a walk crosses a machine,
it sends 
a constant-length message to another machine. Following similar analysis as {\sf HuGE-D},
the time spent for {\sf InCoM} 
is:
$\bigO(r'\cdot L'\cdot |V|/P + N\cdot M(1)/B)$,
where $P$, $B$, and $N$ are \# processors, network bandwidth,
and \# cross-machine messages, respectively.
\subsection {Multi-Proximity-aware Streaming Partitioning}
\label{sec:partition}
Graph partitioning aims at balancing workloads across machines
in a distributed setting, while also reducing cross-machine
communications.
Balanced graph partitioning with the minimum edge-cut is an \NP-hard
problem \cite{BulucMSS016}. Instead, our system, {\sf DistGER} develops
multi-proximity-aware, streaming, parallel graph partitioning ({\sf MPGP}): By leveraging first
and second-order proximity measures,
we select a good-quality partitioning to ensure that each walker stays in a local computing
machine as much as possible, thereby reducing the number of cross-machine communications.
The partitioning is conducted in a node streaming manner, hence it scales to larger graphs
\cite{streaming_model_SIGMOD_2019, streaming_model_VLDB_2018}.
We also ensure workload balancing among the computing servers.

\underline{Partitioning method.}
Given a set of partially computed partitions, $P_1, P_2,..., P_m$, where $m$ denotes
\# machines, an un-partitioned node $v$ is placed in one of these partitions based on the following objective:
\begin{small}
\begin{eqnarray}
&\displaystyle \argmax_{i \in \{1...m\}}\left(PS_1(v,P_i) + PS_2(v,P_i)\right)\times \tau(P_i) \\
& \displaystyle \text{where} \, \, \, \tau(P_i) = 1-\frac{|P_i|}{\gamma \times (\sum \limits_{i=1}^{m} |P_i|)/m} \noindent
\label{Mpad_constrait}
\end{eqnarray}
\end{small}
$PS_1(v, P_i)=|\{u\in N(v)\cap P_i\}|$ and $PS_2(v, P_i)=\sum_{u\in P_i}|\{N(v)\cap N(u)\}|$
represent the first- and second-order proximity scores, respectively.
For edge weight $w(v,u)$, $PS_1(v, P_i)=
\sum_{u\in N(v)\cap P_i}w(v,u)$ and $PS_2(v, P_i)=\sum_{u\in P_i}|\{N(v)\cap N(u)\}| \cdot w(v, u)$.
Intuitively,
$PS_1$ denotes \# neighbors of an unpartitioned node in the target partition, thus a higher value of $PS_1$ implies that the unpartitioned node should have a higher chance to be assigned to the target partition.
Since $PS_2$ is defined by \# common neighbors, which are widely used to measure the similarity of node-pairs during the random walk, a higher value of $PS_2$ is also in line with the characteristics of random walk.
Higher first- and second-order proximity scores increase the chance that a random walker
stays in a local computing machine, thereby reducing cross-machine communication.
{\sf MPGP} also introduces a dynamic load-balancing term $\tau(P_i)$,
where $|P_i|$ denotes the current number of nodes in $P_i$,
hence it is updated after every un-partitioned node's assignment;
and $\gamma$ is a slack parameter that allows deviation from the exact load balancing.
Setting $\gamma=1$ ensures strict load balancing but hampers partition quality.
Meanwhile, setting a larger $\gamma$ relaxes the load-balancing constraint, and creates a skewed partitioning.

{\sf MPGP} differs from some of the existing node streaming-based graph partition schemes, e.g., {\sf LDG} \cite{LDG_2012} and {\sf FENNEL} \cite{FENNEL2014}
in several ways. {\bf First,}
{\sf LDG} and {\sf FENNEL}
set a maximum size for each partition in advance based on the total number of nodes and tend to assign
nodes to a partition until the maximum possible size is reached for that partition.
In contrast, we ensure good load balancing across partitions at all times during the partitioning phase.
{\bf Second,} we find that {\sf LDG} and {\sf FENNEL} cannot partition larger graphs in a reasonable time, for example,
they consume more than one day to partition the Youtube \cite{Flickr_Youtube_Graph} graph with 1M nodes and 3M edges,
while our proposed {\sf MPGP} requires only a few tens of seconds (\S \ref{sec:experiments}), which is due to our optimization methods
discussed below.

\underline{Optimizations and parallelization.}
{\bf First,} to measure first-order proximity scores,
we apply the {\sf Galloping} algorithm \cite{Galloping_2000} that can speed-up intersection computation
between two unequal-size sets. During our streaming graph partitioning,
the partition size gradually increases, hence the {\sf Galloping} algorithm is quite effective.
{\bf Second,} during a second-order proximity score computation, i.e., $PS_2(v, P_i)=\sum_{u\in P_i}|\{N(v)\cap N(u)|$,
we only consider those nodes $u\in P_i$ whose contributions to first-order proximity score are non-zero,
i.e., $u\in N(v)$. This is because if $u$ is not a neighbor of $v$, the random walk cannot reach $u$.
{\bf Third,} nodes streaming order could impact the partitioning time and effectiveness.
We compare a number of streaming orders \cite{LDG_2012,FENNEL2014}, e.g., random, BFS, DFS, and their variations.
Based on empirical results (\S \ref{sec:experiments}), we recommend {\sf DFS+degree}-based streaming for sequential {\sf MPGP}:
Among the un-explored neighbors of a node during DFS, we select the one having the highest degree.
This strategy improves the efficiency of the {\sf Galloping} algorithm.
{\bf Fourth}, with large-scale graphs (e.g., Twitter), sequential MPGP,
still requires considerable time to partition. Thus, we implement a simple parallelization scheme {\sf parallel MPGP} ({\sf MPGP-P}),
as follows:
We divide the stream into several segments and independently partition the nodes of each segment in parallel
via {\sf MPGP}; finally, we combine the partitioning results of all segments.
Based on our empirical results (\S \ref{sec:experiments}), we recommend {\sf BFS+Degree} for {\sf MPGP-P} since
it reduces the partition time greatly and the random walk time on a partitioned graph is comparable to that obtained from the sequential version of MPGP.

\underline{Complexity analysis.}
The running time of {\sf MPGP} is dominated by first- and second-order proximity scores computation for each node,
which are computed in parallel for all $m$ partitions. 
For a first-order proximity score computation via
the {\sf Galloping} algorithm,
let the smaller set size be $S_1$.
In the early stages of partitioning, the larger set constitutes the neighbors of an
un-partitioned node $v$, then the time complexity is $\bigO\left(S_1 \cdot \log|N(v)|\right)$; while at later stages, the larger set is the partition,
thus it takes $\bigO\left(S_1 \cdot \log(|V|/m)\right)$ time.
For the second-order proximity score computing of $v$,
let $S_2$ be the intersection set size generated by the first-order proximity scores computing of $v$.
As the number of common neighbors for each edge $(u,v)$ is processed in parallel by the Galloping algorithm,
the second-order proximity score computing requires $\bigO(\frac{S_2}{T}\cdot|N(v)|\cdot\log N_{max}^v)$ time,
with $T$ threads, where $N_{max}^v=\max_{u\in N(v)}|N(u)|$.

%

\underline{Benefits over {\sf KinghtKing}'s graph partitioning:}
Compared to only load-balancing based partition in {\sf KnightKing} that introduces high communication cost,
our {\sf MPGP} aims at both load-balancing and reducing cross-machine communications.
On average, {\sf MPGP} reduces 45\% cross-machine communications for all graphs that we use in our experiments, resulting about 38.9\% efficiency improvement (\S \ref{sec:experiments}).
%
%
\section{Distributed Embedding Learning}
\label{sec:learning}
Our {\sf learner} in {\sf DistGER} supports distributed learning of node embeddings using the random walks
generated by the {\sf sampler} (Figure~\ref{HuGER_framework}).
We first discuss the shortcomings of state-of-the-art methods
on distributed {\sf Skip-Gram} with negative sampling and provide an overview of our solution (\S \ref{sec:skipgram_challenges}), then
elaborate our three novel improvements (\S \ref{sec:opt_skipgram}), and finally summarize our overall approach (\S \ref{sec:everything}).
\subsection{Challenges and Overview of Our Solution}
\label{sec:skipgram_challenges}
{\sf Skip-Gram} uses the stochastic gradient descent ({\sf SGD}) \cite{SGD_1951}.
Parameters update from one iteration and gradient computation in the next iteration may touch the same node
embedding, making it inherently sequential.
The original {\sf word2vec} leverages {\sf Hogwild} \cite{2011HOGWILD}
for parallelizing {\sf SGD}. It asynchronously processes different node pairs in parallel and ignores any conflicts
between model updates on different threads.
However, there are shortcomings in this approach \cite{Pword2vec_2019,pSGNSCC_2017}.
(1) Since multiple threads can update the same cache lines, updated data needs to
be communicated between threads to ensure cache-coherency, introducing cache lines ping-ponging
across multiple cores, which results in high access latency.
(2) {\sf Skip-Gram} with negative sampling randomly selects a set of nodes as
negative samples for each context node in a context window.
Thus, there is a certain locality in model updates for the same target node, but this feature has not been exploited
in the original scheme.
The randomly generated set of negative samples also introduces
random access for parameter updates in each iteration, degrading performance.

As shown in Figure~\ref{SG_pw2v_dsgl}(a), a walk denoted as $W_1$ is assigned to a
thread, and there is a sliding window $w_1$ containing context nodes $\{v_1, v_2, v_4, v_5\}$ and the target node $v_3$.
The {\sf Skip-Gram} computes the dot-products of word vectors $\varphi_{in}({v_i})$ for a given word $v_i \in \{v_1, v_2, v_4, v_5\}$
and $\varphi_{out}({v_3})$, as well as for a set of $K$ negative samples, $\varphi_{out}({v_k})$ ($v_k \in V$).
Notice that $\varphi_{out}(v_3)$ will be computed four times with four different context words and sets of
negative samples, and these dot-products are level-1 {\sf BLAS} operations \cite{BLAS_2002}, which are limited by memory bandwidth.
Some state-of-the-art work, e.g., {\sf Pword2vec} \cite{Pword2vec_2019} shares
negative samples with all other context nodes in each window (Figure \ref{SG_pw2v_dsgl}(b)), and thus converts
level-1 BLAS vector-based computations into level-3 BLAS matrix-based computations to efficiently utilize computational resources.
However, such matrix sizes are relatively small and still have a gap to reach the peak CPU throughput.
Besides, this way of sharing negative samples cannot significantly reduce the ping-ponging of cache lines across multiple cores.
To improve CPU throughput, {\sf pSGNScc} \cite{pSGNSCC_2017} combines context nodes of the negative sample from
another window into the current window; and together with the current context nodes, it generates larger matrix batches
(Figure \ref{SG_pw2v_dsgl}(c)). Nevertheless, {\sf pSGNScc} needs to maintain a pre-generated inverted index table to find related windows, resulting in additional space and lookup overheads.
%
\begin{figure}
  \centering
  \includegraphics[width= 3.2 in]{./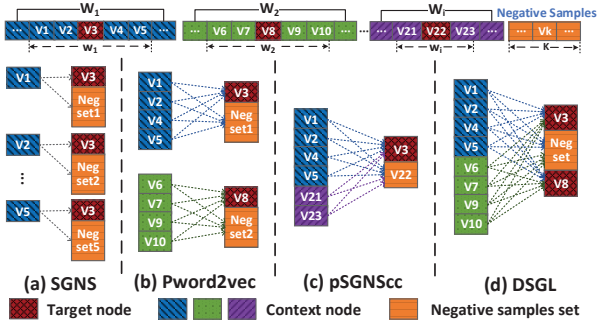}
  \caption{\small Schematic diagram of (a) {\sf Skip-Gram} with negative samples ({\sf SGNS}), (b) {\sf Pword2vec} \cite{Pword2vec_2019}, (c) {\sf pSGNScc} \cite{pSGNSCC_2017}, and (d) {\sf DSGL} (our method)}
  \label{SG_pw2v_dsgl}
\end{figure}

To address the above limitations, we propose a distributed {\sf Skip-Gram} learning model, named {\sf DSGL}
(Figure~\ref{SG_pw2v_dsgl}(d)).
{\sf DSGL} leverages global-matrices and local-buffers for the vectors of context nodes
and target/negative sample nodes during model updates to improve the locality and reduce the ping-ponging of cache lines across multiple cores.
We then propose a multi-windows shared-negative samples computation mechanism to fully exploit the CPU throughput.
Last but not least, {\sf DSGL} uses a hotness-block based synchronization mechanism to synchronize the node vectors in
a distributed setting.
While we design them considering information-oriented random walks, they are generic
and have the potential to improve any random walk and distributed {\sf Skip-Gram} based
graph embedding model (\S \ref{sec:experiments}).
\subsection{Proposed Improvements: DSGL}
\label{sec:opt_skipgram}
\underline{Improvement-I: Global-matrix and local-buffer.}
As reasoned above, the parallel {\sf Skip-Gram} with negative sampling
suffers from poor locality and ping-ponging of cache lines across multiple cores.
Since the model maintains two matrices to update the parameters through
forward and backward propagations during training, where the input
matrix $\varphi_{in}$ and output matrix $\varphi_{out}$ store the vectors of
context nodes and target/negative sample nodes, respectively -- how to construct
these two matrices are critical to the locality of data access.

Since most of the real-world graphs follow a power-law degree distribution \cite{BA99,ABA03},
we find that the generated corpus sampled from those graphs also has this feature --
a few nodes occupy most of the corpus, which are updated frequently during training.
In light of this observation, we construct $\varphi_{in}$ and $\varphi_{out}$ as {\em global matrices}
in descending order of node frequencies from the generated corpus.
It ensures that the vectors of high-frequency nodes stay in the cache lines
as much as possible. Recall that node frequencies in the corpus were already computed in the random
walk phase (\S \ref{sec:randGembedding}).

In addition, to avoid cache lines ping-ponging, {\sf DSGL} uses {\em local buffers} to update the context and
negative sample nodes for each thread, thus the node vectors are
first updated in the local buffers within a lifetime (i.e., when a thread processes a walk during training)
and are then synchronized to the global matrix. 
Precisely, since a sliding window $w$
always shifts the boundary and the target node by one node (Eq.~\ref{skim_gram_eq}), each node in
a walk will appear as a context node in up to $2w$ sliding windows
and as a target node once, thus a context node can be reused up to $2w+1$
times in the lifetime. It is necessary to consider this
temporal locality and build a {\em local context buffer} for each node vector
in a walk.
Meanwhile, on-chip caches where randomly-generated
negative sample nodes are located, have a higher chance to be
updated by multiple threads, and updating their vectors to the
global $\varphi_{out}$ requires low-level memory accesses.
To alleviate this, {\sf DSGL} also constructs a {\em local negative buffer}
for vectors of negative samples during one lifetime.
It randomly selects $K$ (negative sample set size) $\times$ $L$ (walk length, i.e., total steps) negative samples into the negative buffer from $\varphi_{out}$. Thus, {\sf DSGL} can use a different negative sample set with the same size ($K$) at each step.
For the target node, due to its lower re-usability compared to context nodes,
it is only updated once in global $\varphi_{out}$ in the lifetime;
hence, we do not create a buffer for it.
The constructed local buffers
require small space,
since the sizes of buffers are related to the walk length.
The length of the information-centric random walk is much
smaller than that of traditional random walks (\S \ref{sec:randGembedding}).
\begin{figure}[tb!]
  \centering
  \includegraphics[width= 3.2 in]{./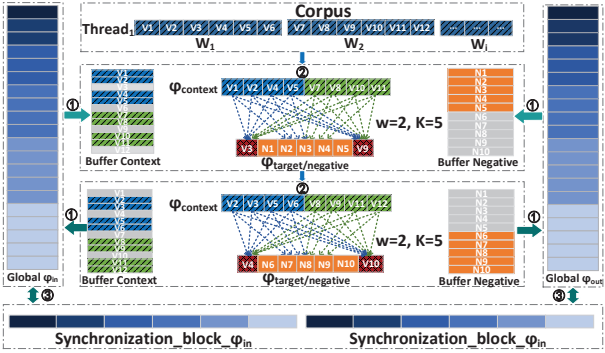}
  \caption{\small Workflow of our {\sf DSGL}: distributed Skip-Gram learning}
  \label{DSGL_framework}
\end{figure}

\underline{Improvement-II: Multi-windows shared-negative samples.}
CPU \par
\noindent throughput of {\sf Skip-Gram} model can be significantly improved by
converting vector-based computations into matrix-based computations
\cite{floating_2015};
however, the batch matrix sizes are relatively small for existing methods \cite{Pword2vec_2019,pSGNSCC_2017}
and cannot fully utilize CPU resources. Based on this consideration, we design a multi-windows
shared-negative samples mechanism: To increase the batch matrix sizes, we
batch process the vector updates of context windows from multiple ($\geq$ 2) walks allocated to the same thread.

Consider Figure~\ref{SG_pw2v_dsgl}(d),
two walks $W_1$ and $W_2$ are assigned to the same thread, where $w_1$ and $w_2$ are two context windows in $W_1$ and $W_2$;
$v_3$ and $v_8$ are target nodes in $w_1$ and $w_2$, respectively.
{\sf DSGL} simultaneously batch updates the node vectors
in two context windows $w_1$ and $w_2$ based on level-3 {\sf BLAS} matrix-based computations,
the set of negative samples is shared across the batch context nodes,
$v_3$ and $v_8$ are used as additional negative samples for $w_2$ and $w_1$, respectively.
After the lifetime of $W_1$ and $W_2$, the updated parameters of all context nodes and target/negative samples are
written back to $\varphi_{in}$ and $\varphi_{out}$, respectively.
Assuming the set of negative samples $K=5$, the batch matrix sizes of one iteration for {\sf Pword2vec}  \cite{Pword2vec_2019} (Figure~\ref{SG_pw2v_dsgl}(b))
is $4\times6$,
while that of {\sf DSGL} is extended to
$8\times7$.
This utilizes higher CPU throughput
and accelerates training without sacrificing accuracy(\S~\ref{sec:experiments}).
In practice, the number ($\geq$ 2) of multi-windows can be flexibly set according to available hardware resources.

\underline{Improvement-III: Hotness-block based synchronization.} In a distributed setting, the updated node vectors need to be synchronized with each computing machine.
Assuming that the generated corpus is partitioned into $m$ machines, each machine independently
processes the local corpus and periodically synchronizes the local parameters with other $m - 1$ machines.
For a full model synchronization across computing machines, the communication cost is $\bigO(|V|\cdot d \cdot m)$,
where $|V|$ is the total number of nodes and $d$ denotes the dimension of vectors, e.g.,
for 100 million nodes with 128 dimensions, the full model synchronization needs $\approx102.4$ billion messages across 4 machines --
it will be difficult to meet the efficiency requirement.

Notice that the node occurrence counts in generated corpus
also follows a power-law distribution, implying that high-frequency nodes have a higher probability of being accessed
and updated during training.
Following this, we propose a
hotness-block based synchronization mechanism in {\sf DSGL}. Instead of
full synchronization,
we only conduct more synchronization for hot nodes than that for
low-frequency nodes. Since our global matrices are constructed based on the descending order of node frequencies in the generated corpus (Improvement-I),
it provides a favorable condition to achieve hotness-block based synchronization.
The global matrices are partitioned into several blocks (i.e., hotness-blocks)
based on the same frequency of nodes in the corpus, and are denoted as $B(i)$, $0<i\leq ocn_{max}$,
where $ocn_{max}$ is the largest number of occurrences of any node in the corpus.
We randomly sample one node from each hotness-block; for all these sampled nodes,
we synchronize their vectors across all computing machines during one synchronization period.
Due to the node frequency skewness in the corpus, for each node in $B(i)$, the probability of it
being sampled for synchronization is inversely proportional to $|B(i)|$. Thus, we
ensure that the hot nodes are synchronized more during the entire training procedure, while the low-frequency
nodes would have relatively less synchronization. Compared to the full model synchronization mechanism,
our synchronization cost is $\bigO(ocn_{max}\cdot d \cdot m)$, where  $ocn_{max}<<$ $|V|$,
indicating that it can significantly reduce the load on network, while keeping the parameters on each computing machine updated aptly.
\subsection{Putting Everything Together}
\label{sec:everything}
First, {\sf DSGL} constructs two global matrices $\varphi_{in}$ and $\varphi_{out}$ in
descending order of node frequencies from the generated corpus.
It leverages a pipelining construction during the random walk phase,
one thread is responsible for counting the frequency of each node on its local walk;
at the end of all random walks, these counts from computing machines are aggregated
to construct global $\varphi_{in}$ and $\varphi_{out}$.
Next, {\sf DSGL} uses a multi-windows shared-negative samples
computation for each thread.
As shown in Figure \ref{DSGL_framework},
two walks $W_1$ and $W_2$ are assigned to $Thread_1$,
suppose \# multi-windows = 2 and
negative sample set size $K$ = 5.
For maintaining access locality and reducing cache lines ping-ponging,
two local buffers per thread are constructed for the context (blue and green blocks) and negative sample (orange blocks) nodes. The target nodes are denoted as red blocks.
Initially, the two buffers will respectively load the vectors associated
with nodes from $\varphi_{in}$ and $\varphi_{out}$,
then $Thread_1$ proceeds over $W_1$ and $W_2$ with matrix-matrix multiplications,
and the updated vectors are written to the buffers at each step.
After the lifetime of $W_1$ and $W_2$, the latest node vectors in buffers are
written back to global $\varphi_{in}$ and $\varphi_{out}$.
{\sf DSGL} periodically synchronizes the updated vectors across multiple
computing machines by hotness-block based synchronization.

\section{Related Work}
\label{sec:related}
%
\spara{Random walk-based graph engines. }
{\sf Graphwalker} \cite{GraphWalker_2020} focuses on improving I/O efficiency to support
scalable random walks on a single machine. {\sf FlashMob} \cite{FlashMob_2021} optimizes
cache utilization to make memory accesses more sequential 
for faster random walks.
{\sf ThunderRW} \cite{ThunderRW_2021} addresses irregular memory access patterns of random walks.
{\sf CraSorw} \cite{CraSorw_2022} proposes an I/O-efficient disk-based system to
support scalable second-order random walks.
The GPU-based system, {\sf NextDoor} \cite{nextdoor_2021} leverages intelligent scheduling and caching strategies to mitigate
irregular memory access patterns in random walks.
Different from single-machine random walk systems,
{\sf KnightKing} \cite{KnighKing_2019} 
provides a general-purpose distributed random walk engine.
However, it ignores information-effectiveness during random walks and is limited by ineffective
graph partition 
(\S~\ref{sec:dRand}).
%

\spara{Graph embedding algorithms.} 
Sequential graph embedding techniques \cite{CaiZC18} fall into three categories.
Matrix factorization-based algorithms \cite{ProNE_2019,NetSMF_2019,NRP_2020,QiuDMLWT18,WangCWP0Y17}
construct feature representations based on the adjacency or Laplacian matrix, and involve spectral techniques \cite{spectral_2001}.
Graph neural networks (GNNs)-based approaches \cite{TuCWY018,WangC016,GraphSAGE_2017,Graph_attention_2018,Graphgan_2018}
focus on generalizing graph spectra into
semi-supervised or supervised graph learning. 
Both techniques incur high computational overhead and DRAM dependencies, limiting their scalability to large graphs.
Random-walk methods \cite{DeepWalk_2014,node2vec_2016,HuGE_2021,Line_2015,HuGE+_2022} transform a graph into a set of random walks through sampling and then adopt {\sf Skip-Gram} to generate embeddings.
They are 
more flexible, parallel-friendly, and scale to larger graphs \cite{FlashMob_2021}.
{\sf HuGE+} \cite{HuGE+_2022} is a recent extension of {\sf HuGE} \cite{HuGE_2021}, which considers the information content of a node during the next-hop node selection to improve the downstream task accuracy. It uses the same {\sf HuGE} information-oriented method to determine the walk length and number of walks per node (\S \ref{sec:randGembedding}), and hence the efficiency is similar to that of {\sf HuGE}.


\spara{Graph embedding systems and distributed embedding.} To address efficiency challenges with large graphs, 
recently proposed {\sf GraphVite} \cite{GraphVite_2019} and
Tencent's graph embedding system \cite{Tencent_GE_2020} follows sampling-based techniques on a
CPU-GPU hybrid architecture, 
simultaneously performing graph random
walks on CPUs and embedding training on GPUs. {\sf Marius} \cite{marius_2021} optimizes
data movements between CPU and GPU on a single machine for large-scale knowledge graphs embedding.
{\sf Seastar} \cite{seastar_2021} develops
a novel GNN training framework on GPUs with a vertex-centric \cite{luo2023multi} programming model. 
We deployed {\sf DistGER} on GPU, but it does not provide a significant improvement, especially for large-scale graphs. Similar to the above works, computing gaps between CPUs and GPUs and the limited memory of GPUs still plague the efficiency of graph embedding.

Other approaches attempt to scale graph embeddings from a distributed perspective.
{\sf HET-KG} \cite{dong2022het} 
is a distributed system for knowledge graph embedding. It introduces a cache embedding table to reduce communication overheads among machines.
{\sf AliGraph} \cite{AliGraph_2019} optimizes sampling operators for distributed GNN training and reduces network communication by caching nodes
on local machines. Amazon has released {\sf DistDGL} \cite{DistDGL_2020}, a distributed graph embedding framework for GNN model with mini-batch training
based on the {\sf Deep Graph Library} \cite{dgl_2019}. {\sf Pytorch-Biggraph} \cite{PBG_2019} leverages graph partitioning and parameter servers
to learn large graph embeddings on multiple CPUs in a distributed environment based on {\sf PyTorch}.
However, the efficiency and scalability of {\sf DistDGL} and {\sf Pytorch-BigGraph} are affected by parameter synchronization
(as demonstrated in \S\ref{sec:experiments}). {\sf ByteGNN} \cite{ZhengCCSWLCYZ22} is a recently proposed
distributed system for GNNs, with mini-batch training and two-level scheduling to improve parallelism and resource utilization,
and tailored graph partitioning for GNN workloads. 
Since {\sf ByteGNN} is not publicly available yet, we cannot compare it in our experiments.

There are also approaches attempting to address computational efficiency challenges with new hardware. {\sf Ginex} \cite{Ginex_2022} proposes an SSD-based GNN training system with feature vector cache optimizations that can support billion-edge graph datasets on a single machine. {\sf SmartSAGE} \cite{SMARSAGE_2022} develops software/hardware co-design based on an in-storage processing architecture for GNNs to overcome the memory capacity bottleneck of training. {\sf ReGNN} \cite{liu2022regnn} designs a ReRAM-based processing-in-memory architecture consisting of analog and digital modules to process GNN in different phases effectively. Although new hardware-based approaches open up opportunities for training larger datasets or providing more accelerations, their programming compatibility and prohibitive expensive still pose challenges.
\begin{table*}[t!]
\begin{minipage}{0.175\linewidth}
\centering
\captionof{table}{\small Datasets statistics \label{graph_datasets}}
\begin{tiny}
\begin{tabular}{c||c|c}
      {\bf Graph} & {\bf \#nodes} & {\bf \#edges} \\ \hline
      {\em FL} & 80\,513     & 5\,899\,882 \\
      {\em YT} & 1\,138\,499 & 2\,990\,443 \\
      {\em LJ} & 2\,238\,731 &14\,608\,137 \\
      {\em OR} & 3\,072\,441 & 117\,185\,083 \\
      {\em TW} & 41\,652\,230 & 1\,468\,365\,182 \\
\end{tabular}
\end{tiny}
\end{minipage}%
 \quad
 \begin{minipage}{.265\linewidth}
\centering
\tabcolsep=0.05cm

\captionof{table}{\small Avg. memory footprint (GB) of {\sf DistGER} and {\sf KnightKing} on each machine, where $\sigma$ is the standard deviation}
\label{Memory_usage}
\begin{tiny}
\newcommand{\tabincell}[2]{\begin{tabular}{@{}#1@{}}#2\end{tabular}}
  \begin{tabular}{c|cc|cc}
    { }&\multicolumn{2}{c|}{\bfseries{ Sampling}}&\multicolumn{2}{c}{\bfseries{Training}}\\
    \hline
    {\bf{Graph}} &{\sf KnightKing} &{\sf DistGER} &{\sf KnightKing} &{\sf DistGER} \\
    \hline
     {\em FL} & 0.66($\pm$0.06)	&{\bf 0.41($\pm$0.02)}	&1.31($\pm$0.17) 	&{\bf 0.86($\pm$0.06)} 	\\

     {\em YT} &4.11($\pm$0.55)	&{\bf 1.36($\pm$0.23)} 	&4.73($\pm$0.72) 	&{\bf 4.26($\pm$0.63)} \\

     {\em LJ} & 7.65($\pm$0.82)	&{\bf 1.95($\pm$0.16)}	&6.38($\pm$0.97) 	&{\bf 5.49($\pm$0.85)} 	\\

     {\em CO} &10.98($\pm$1.03)	&{\bf 3.27($\pm$0.79)} 	&8.52($\pm$1.01) 	&{\bf 6.86($\pm$0.69)} 	\\

     {\em TW} & out-of-memory	&{\bf 20.18($\pm$3.62)} 	&out-of-memory 	& {\bf 67.16($\pm$5.18)} 	\\
\end{tabular}
\end{tiny}

\end{minipage}
\quad
\begin{minipage}{.25\linewidth}
    \centering
    \includegraphics[width= 1.85in, height = 1.2in]{./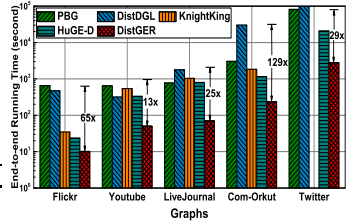}%
    \captionof{figure}
      {\small Efficiency: {\sf PBG} \cite{PBG_2019}, {\sf DistDGL} \cite{DistDGL_2020}, {\sf KnightKing} \cite{KnighKing_2019}, {\sf HuGE-D} (baseline), {\sf DistGER} (ours)
        \label{overall_performance}
      }
\end{minipage}
\quad
\begin{minipage}{.25\linewidth}
    \centering
    \includegraphics[width= 1.85in, height = 1.2in]{./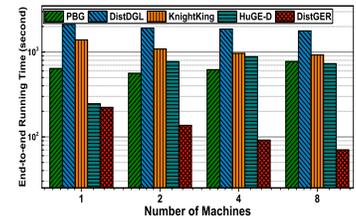}%
    \captionof{figure}
      {\small Scalability: {\sf PBG} \cite{PBG_2019}, {\sf DistDGL} \cite{DistDGL_2020}, {\sf KnightKing} \cite{KnighKing_2019}, {\sf HuGE-D} (baseline), {\sf DistGER} (ours)
        \label{Dist_scalability}
      }
\end{minipage}
\end{table*}

\section{Experimental Results}
\label{sec:experiments}
We evaluate the efficiency (\S \ref{sec:overall}) and scalability (\S \ref{sec:scalability}) of our proposed method, {\sf DistGER}
by comparing with {\sf HuGE-D} (baseline),
{\sf KnightKing} \cite{KnighKing_2019}, {\sf PyTorch-BigGraph} ({\sf PBG}) \cite{PBG_2019}, and {\sf Distributed DGL}
({\sf DistDGL}) \cite{DistDGL_2020}. We also compare the effectiveness (\S \ref{sec:effectiveness}) of generated embeddings
on link prediction.
Finally, we analyze efficiency due to individual
parts of {\sf DistGER} (\S \ref{sec:individual})
and the generality of {\sf DistGER} for other random walk-based embeddings (\S \ref{sec:generality}).
Our codes and datasets are at \cite{code}.
\subsection{Experimental Setup}
\label{sec:setup}
\spara{Environment.} We conduct experiments on a cluster of 8 machines with 2.60GHz Intel $^\circledR$ Xeon $^\circledR$ Gold 6240 CPU with 72 cores (hyper-threading)
in a dual-socket system, and each machine is equipped with 192GB DDR4 memory and connected by a 100Gbps network.
The machines run Ubuntu 16.04 with Linux kernel 4.15.0. We use GCC v9.4.0 for compiling {\sf DistGER}, {\sf KnightKing}, and {\sf HuGE-D},
and use Python v3.6.15 and torch v1.10.2 as the backend deep learning framework for {\sf Pytorch-BigGraph} and {\sf DistDGL}.

\spara{Datasets. } We employ five widely-used, real-world graphs
(Table~\ref{graph_datasets}): {\em Flickr} (FL) \cite{Flickr_Youtube_Graph},
{\em Youtube} (YT) \cite{Flickr_Youtube_Graph},
{\em LiveJournal} (LJ) \cite{BlogCatalog_Twitter_LiveJournal_Graph},
{\em Com-Orkut} (OR) \cite{com-orkut_2012}, and {\em Twitter} (TW) \cite{twitter_2010}.
The first two graphs are selected for multi-label node classification with distinct number of node labels 195 and 47, respectively, 
where labels in {\em Flickr} represent interest groups of users, and {\em Youtube}'s labels represent groups of viewers that enjoy common video genres. The last four graphs are used in link prediction. We also use synthetic graphs \cite{RMAT_2004} (up to 1 billion nodes, 10 billion edges) and a real-world {\em UK graph} \cite{BSVLTAG} (100M nodes, 3.7B edges) to assess the scalability of {\sf DistGER}.
Considering the default settings of popular random walk-based methods (e.g., Deepwalk, node2vec, HuGE), we use their undirected version.

\spara{Competitors.} We compare {\sf DistGER} against three state-of-the-art distributed graph embedding frameworks: the distributed random walk engine, {\sf KnightKing} {\scriptsize\url{https://github.com/KnightKingWalk/KnightKing}}
\cite{KnighKing_2019}; the distributed multi-relations based graph embedding system, {\sf PyTorch-BigGraph} ({\sf PBG})
{\scriptsize\url{https://github.com/facebookresearch/PyTorch-BigGraph}} \cite{PBG_2019} -- designed by Facebook; and
the distributed graph neural networks-based system, {\sf DistDGL} {\scriptsize\url{https://github.com/dmlc/dgl}}
\cite{DistDGL_2020} -- recently proposed by Amazon. We also implement {\sf HuGE-D}, a distributed version of
information-centric random walk-based graph embedding ({\sf HuGE} \cite{HuGE_2021}), on top of {\sf KnightKing},
served as our baseline. Since {\sf KnightKing} and {\sf HuGE-D} provide distributed support only for
random walk without that for embedding learning, we generate their node embeddings using
{\sf Pword2vec} {\scriptsize\url{https://github.com/IntelLabs/pWord2Vec}} \cite{Pword2vec_2019},
the most popular distributed {\sf Skip-Gram} system released by Intel.

\spara{Parameters.} For {\sf DistGER} and {\sf HuGE-D} random walks, we set
parameters $\mu$=0.995, $\delta$=0.001 based on information measurements (\S \ref{sec:preliminaries}),
while {\sf KnightKing} uses $L$=80 and $r$=10 that are routine configurations in the traditional
random walk-based graph embedding \cite{node2vec_2016, DeepWalk_2014, KnighKing_2019}. For {\sf DistGER}, {\sf KnightKing}, and {\sf HuGE-D} training,
we set the sliding window size $w$=10, number of negative samples $K$=5, and synchronization period=0.1 sec \cite{Pword2vec_2019},
and additionally, multi-windows number=2, $\gamma$=2 for {\sf DisrGER}.
For fair comparison across all systems, 
we set the embedding dimension $d$=128 that is commonly used \cite{HuGE_2021,node2vec_2016,DeepWalk_2014,Line_2015,Verse_2018,ProNE_2019},
and report the average running time for each epoch. For task effectiveness evaluations,
we find the best results from a grid search over learning rates from 0.001-0.1, \# epochs from 1-30,
and \# dimensions from 128-512.

\eat{
\begin{table}
\newcommand{\tabincell}[2]{\begin{tabular}{@{}#1@{}}#2\end{tabular}}
  \caption{\small Avg. memory footprint (GB) of {\sf DistGER} and {\sf KnightKing} on each machine, where $\sigma$ is the standard deviation.}
  \label{Memory_usage}
  \begin{center}
   \footnotesize
  \begin{tabular}{c|cc|cc}
    { }&\multicolumn{2}{c|}{\bfseries{ Sampling}}&\multicolumn{2}{c}{\bfseries{Training}}\\
    \hline
    {\bf{Graph}} &{\sf KnightKing} &{\sf DistGER} &{\sf KnightKing} &{\sf DistGER} \\
    \hline
     {\em Flickr} & 0.66($\pm$0.06)	&{\bf 0.41($\pm$0.02)}	&1.31($\pm$0.17) 	&{\bf 0.86($\pm$0.06)} 	\\

     {\em Youtube} &4.11($\pm$0.55)	&{\bf 1.36($\pm$0.23)} 	&4.73($\pm$0.72) 	&{\bf 4.26($\pm$0.63)} \\

     {\em LiveJournal} & 7.65($\pm$0.82)	&{\bf 1.95($\pm$0.16)}	&6.387($\pm$0.97) 	&{\bf 5.49($\pm$0.85)} 	\\

     {\em Com-Orkut} &10.98($\pm$1.03)	&{\bf 3.27($\pm$0.79)} 	&8.52($\pm$1.01) 	&{\bf 6.86($\pm$0.69)} 	\\

     {\em Twitter} & out-of-memory	&{\bf 37.1($\pm$5.28)} 	&out-of-memory 	& {\bf 79.5($\pm$7.27)} 	\\
\end{tabular}
\end{center}
\end{table}
}
\subsection{Efficiency and Memory Use w.r.t. Competitors}
\label{sec:overall}
%
We report the end-to-end running times of {\sf PBG}, {\sf DistDGL}, {\sf KnightKing}, {\sf HuGE-D}, and {\sf DistGER}
on five real-world graphs with the cluster of 8 machines in Figure~\ref{overall_performance}.
The reported end-to-end time includes the running time of partitioning, random walks (for random walk-based frameworks), and training procedures.
{\sf DistGER} significantly outperforms the competitors
on all these graphs, achieving a speedup ranging from 2.33$\times$ to 129$\times$. 
Recall that {\sf DistGER} is a similar type of system as {\sf KnightKing} and {\sf HuGE-D},
and our key improvements are discussed in \S \ref{sec:DistGER} and in \S \ref{sec:learning}.
Analogously, Figure~\ref{overall_performance} exhibits that our system, 
{\sf DistGER} achieves an average speedup of 9.25$\times$ and 6.56$\times$ compared with {\sf KnightKing} and {\sf HuGE-D}.
Notice that we fail to run {\sf KnightKing} on the largest {\em Twitter} dataset
because its routine random walk strategy requires more main memory space.
The advantage of information-centric random walk in {\sf HuGE} is almost wiped out in {\sf HuGE-D}
due to on-the-fly information measurements and the higher communication costs in a distributed setting.
The multi-relation-based {\sf PBG} leverages a parameter server to synchronize embeddings between clients,
resulting in more load on the communication network. As a result, {\sf PBG} is on average
26.22$\times$ slower than {\sf DistGER}. For graph neural network-based system {\sf DistDGL},
due to the long running time of graph sampling (e.g., taking 80\% of the overhead for the {\sf GraphSAGE}),
it is highly inefficient than other systems. For the billion-edge {\em Twitter} graph, it does not terminate in 1 day.
%
{Table ~\ref{Memory_usage}} shows {\sf DistGER}'s average memory footprint on each machine of the 8-machine cluster. 
Compared
to 
same type of system
{\sf KnightKing}, 
{\sf DistGER} requires less memory for sampling and training.

\subsection{Scalability w.r.t. Competitors}
\label{sec:scalability}
%
%
Figure~\ref{Dist_scalability} shows end-to-end running times of all competing
systems on the {\em LiveJournal} graph, as we increase \# machines
from 1 to 8 to evaluate scalability. {\sf DistGER} achieves better scalability than the other
four distributed systems.
{\sf PBG} leverages a parameter server and a shared network filesystem
to synchronize the parameters in the distributed model. 
When the number of machines increases, {\sf PBG} puts more load
on the communications network, resulting in poor scalability. Likewise, {\sf DistDGL}
is bounded by the synchronization overhead for gradient updates,
limiting its scalability.
Both {\sf KnightKing} and {\sf HuGE-D} suffer from higher communication costs during random walks,
due to their only workload-balancing partitioning scheme (\S \ref{sec:dRand}, \S \ref{sec:individual}).
Since {\sf HuGE-D} is implemented on top of {\sf KnigtKing},
it exhibits worse scalability due to high communication costs and on-the-fly information measurements in a distributed setting (\S \ref{sec:HUGED}).
In comparison, {\sf DistGER} incorporates multi-proximity-aware streaming graph partitioning and incremental computations
to reduce both communication and computation costs, it also employs hotness-block based parameters synchronization
during training to dramatically reduce the pressure on network bandwidth. Hence, {\sf DistGER} achieves better scalability than other systems.
Due to space limitations, we omit {\sf DistGER}'s scalability results on other graphs, which exhibit similar trends. On {\em Twitter}, the end-to-end running times {\sf DistGER} on 1, 2, 4, and 8 machines are 3090s, 1739s, 1197s, and 746s, respectively,
while on {\em Com-Orkut}, the results are 304s, 204s, 149s, and 89s, respectively. 
The results show a good linear relationship.

\begin{table}
\quad
\begin{minipage}{0.46\linewidth}
    \centering
    \includegraphics[width= 1.6 in]{./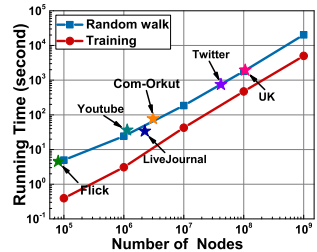}%
    \captionof{figure}
      {\small {Scalability of {\sf DistGER} on synthetic graphs, where Y-axis is in log-scale}}
        \label{Dist_scalability_data}
      
\end{minipage}\hfill
\quad
\begin{minipage}{.46\linewidth}
    \centering
    \includegraphics[width= 1.6 in]{./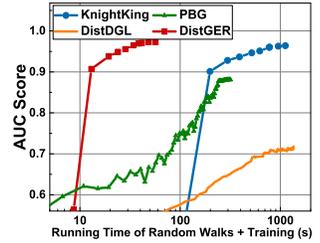}%
    \captionof{figure}
      {\small {The influence of running time on embedding quality for {\sf DistGER} and competitors}}
        \label{Dist_time_auc}
\end{minipage}
\end{table}

To further assess the scalability of {\sf DistGER}, we generate synthetic graphs \cite{RMAT_2004} with a fixed node degree of 10 and the number of nodes from $10^5$ to $10^9$. Figure~\ref{Dist_scalability_data} presents the running times for random walks and training on these synthetic graphs using a cluster of 8 machines, suggesting that the running time increases linearly with the size of a graph, and {\sf DistGER} has the capability to handle even billion-node graphs. Moreover, the running times for six real-world graphs (including the {\em UK graph} with $|E|=3.7B$, $|V|=100M$, for which the competing systems do not terminate in 1 day or crash due to hardware and memory limitation) are inserted into the plot, which is consistent with the trend on synthetic data.

\subsection{Effectiveness w.r.t. Competitors}
\label{sec:effectiveness}
\spara{Link prediction.} To perform link prediction on a given graph $G$, following \cite{HuGE_2021,node2vec_2016,Verse_2018,NRP_2020},
we first uniformly at random remove 50\% edges as positive test edges, and the rest are used as positive training edges.
We also provide negative training and test edges by considering those node pairs between which no edge exists in $G$.
We ensure that the positive and negative set sizes are similar. 
The link prediction is conducted as a classification task
based on the similarity of $u$ and $v$, i.e., $\varphi(u)\cdot\varphi(v)$.
The effectiveness of link prediction is measured via the $AUC$ (Area Under Curve) score \cite{AUC_kdd} -- the higher the better.
We repeat this procedure 50 times to offset the randomness of edge removal and report the average $AUC$ in
Table~\ref{AUC_results}.
{\sf DistGER} outperforms all competitors on these graphs, except for {\sf PBG} on {\em Com-Orkut}, where {\sf DistGER} ranks second.
On average, {\sf DistGER} has an 11.7\% higher $AUC$ score compared with the other three systems, thanks to our
information-centric random walks. {\sf PBG} is the best on {\em Com-Orkut} because this graph is much denser
and is friendly to the multi-relationship-based model in {\sf PBG}.
Figure~\ref{Dist_time_auc} exhibits accuracy-efficiency tradeoffs of {\sf DistGER} and competitors, i.e., their $AUC$ convergence curves w.r.t. increasing running times of random walks and training, over {\em LiveJournal}, further indicating
that {\sf DistGER} has better efficiency and effectiveness than the competitors.
%
\begin{table}[h!]
\newcommand{\tabincell}[2]{\begin{tabular}{@{}#1@{}}#2\end{tabular}}
  \caption{\small $AUC$ scores of {\sf DistGER} and competitors for link prediction}
  \label{AUC_results}
  \begin{center}
  \footnotesize
  \begin{tabular}{cccccc}
    {Method}&\tabincell{c}{Youtube}&{LiveJournal}&\tabincell{c}{Com-Orkut}&{ Twitter}\\
    \hline
    {\sf PBG}        & 0.753           &0.882            &\bfseries{0.955} &0.912\\

    {\sf DistDGL}    &0.894            &0.718            &0.815            & running time $>$ 1 day \\

    {\sf KnightKing} &0.904            &0.963            & $0.918$         & out-of-memory\\

    {\sf DistGER}    &\bfseries{0.966} &\bfseries{0.976} &0.921            &\bfseries{0.919}\\
\end{tabular}
\end{center}
\end{table}

\spara{Multi-label node classification.}
This task predicts one or more labels for each graph node and has applications in 
text categorization \cite{zhang2006multilabel} and bioinformatics \cite{zhang2018ontological}.
We use embedding vectors and a one-vs-rest logistic regression classifier
with L2 regularization \cite{MLC_LIBLINEAR_2008}, 
then evaluate the effectiveness by micro-averaged F1 ($Micro-F1$) and macro-averaged F1 ($Macro-F1$) \cite{WangC016}
scores, where $Micro-F1$ gives equal weight to each test instance and $Macro-F1$ assigns equal weight to each label category \cite{keikha2018community}.
Following \cite{HuGE_2021,node2vec_2016,DeepWalk_2014,Line_2015,Verse_2018},
we select 10\% to 90\% training data ratio on {\em Flickr}, and 1\% to 9\% training ratio on {\em Youtube}.
We report the averaged $Macro-F1$ and $Micro-F1$ scores from 50 trials in Figure~\ref{Dist_MLC_mac_mic_F1}.
We find that {\sf DistGER} has better $Macro-F1$ and $Micro-F1$ scores
than existing frameworks, 
gaining 9.2\% and 3.3\% average improvements, respectively, due to its more effective information-centric random walks.

\begin{figure}[h!]
  \centering
  \includegraphics[width= 3.45 in]{./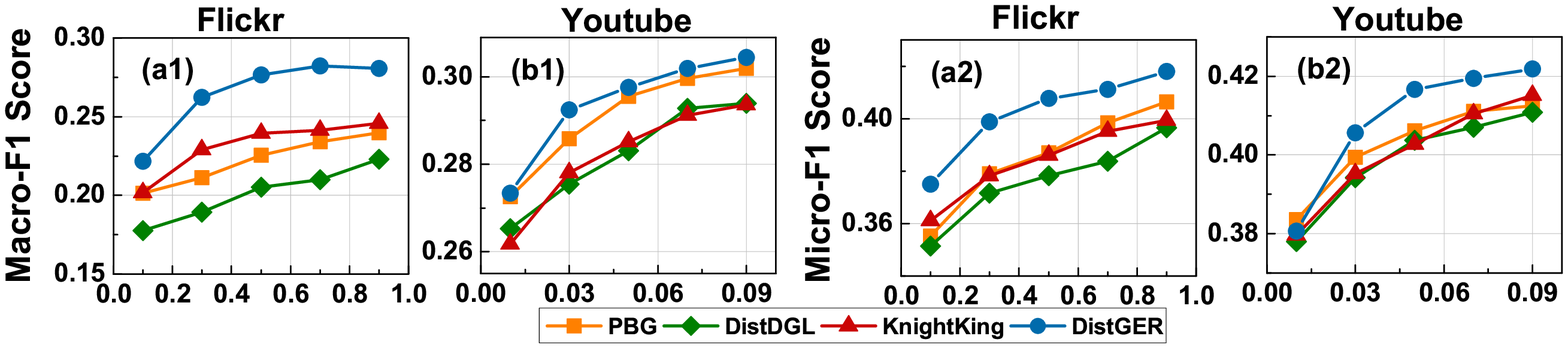}
  \caption{\small $Macro-F1$ (a1, b1) and $Micro-F1$ (a2, b2) scores for multi-label node classification. $X$-axis: training data ratio}
  \label{Dist_MLC_mac_mic_F1}
\end{figure}

%
%
%
%
%
%
%
%
%
%

\begin{figure}
  \centering
  \includegraphics[width= 3.2 in]{./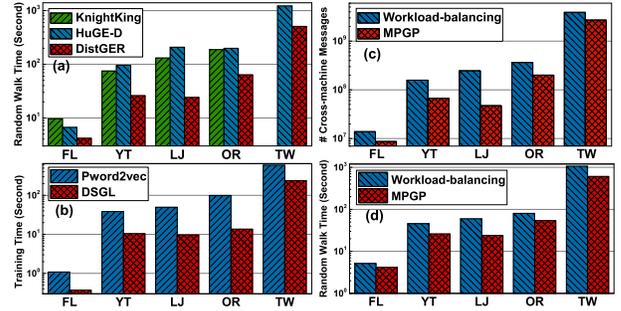}
  \caption{\small {(a) Random walk efficiency, (b) training efficiency, (c) \# cross-machine messages, (d) random walk efficiency for {\sf MPGP} (ours) and workload-balancing scheme ({\sf KnightKing})}}
  \label{Dist_efficiency_sampling_training_MPGP}
\end{figure}

\begin{table*}[t!]
\begin{minipage}{0.275\linewidth}
\centering
\renewcommand\arraystretch{1.2}
\captionof{table}{\small Performance evaluation of partitioning for {\sf DistGER} and Competitors } 
\label{Partition_sechme_overhead}
\begin{scriptsize}
\begin{tabular}{ccccc}
    \multicolumn{5}{c}{{\bfseries (a) Partitioning time for {\sf DistGER} and competitors }} \\
    \hline
    {\bf graph} & {\sf PBG} & {\sf DistDGL} & {\sf DistGER}\\
                &           & ({\sf METIS}) &  ({\sf MPGP}) \\
    \hline
    {\sf FL} & 383.28 s & 127.72 s & \bfseries{15.96 s} \\
    {\sf YT} & 349.15 s & 116.30 s & \bfseries{13.56 s} \\
    {\sf LJ} & 458.52 s & 425.19 s & \bfseries{36.42 s} \\
    {\sf OR} & 2662.62 s & 2761.25 s &\bfseries{294.68 s}\\
    {\sf TW} & 22 hour s & $>$ 1 day &\bfseries{9 hours}\\
    \hline
    \multicolumn{5}{c}{{\bfseries (b) Evaluation of {\sf Parallel MPGP} }} \\
    \hline
    {\bf graph} & {\sf Streaming} & {\sf Partitioning} & {\sf Walking}\\
    \hline
    \multirow{2}{*}{\sf LJ} &DFS+deg & 21.86 s & \bfseries{23.78 s} \\
           & BFS+deg & \bfseries{21.25 s} & 24.79 s \\
    \multirow{2}{*}{\sf OR} &DFS+deg & \bfseries{151.29 s} & 77.12 s \\
           & BFS+deg & 156.37 s & \bfseries{46.55 s} \\
    \multirow{2}{*}{\sf TW} &DFS+deg & \bfseries{1940.65} s & 683.81 s \\
           & BFS+deg & 2034.21 s & \bfseries{590.36 s}
\end{tabular}
\end{scriptsize}
\end{minipage}
\quad
\begin{minipage}{.3\linewidth}
    \centering
    \includegraphics[width= 2.5in, height = 1.45 in]{./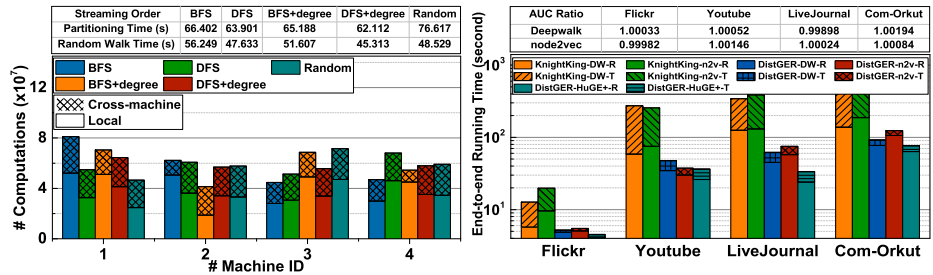}%
    \captionof{figure}
      {\small The distribution of local computations and cross-machine communications for different streaming orders on {\em LiveJournal}. The top table reports their running times for partitioning and random walks
        \label{Dist_MPaD_streaming}
      }
\end{minipage}
\qquad
\begin{minipage}{.37\linewidth}
    \centering
    \includegraphics[width= 2.5 in, height = 1.45 in]{./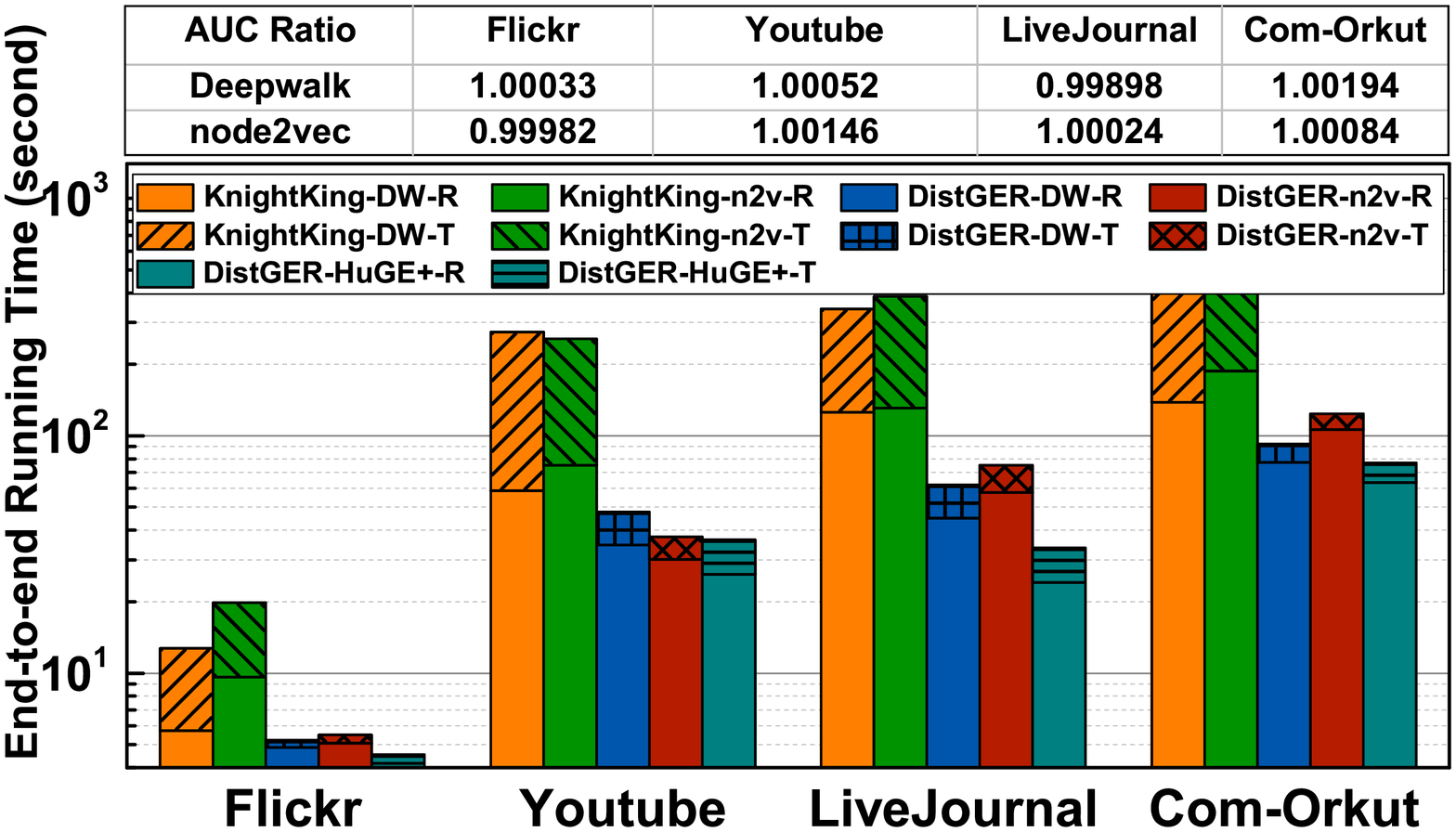}%
    \captionof{figure}
      {\small Generality of {\sf DistGER} vs. {\sf KnightKing}. The bars show random walk efficiency ($-R$) and training efficiency ($-T$) for {\sf Deepwalk} ({\sf DW}), {\sf node2vec} ({\sf n2v}) and {\sf HuGE+}. The top table shows the ratio $\frac{\text{{\em AUC} for {\sf DistGER}}}{\text{{\em AUC} of {\sf KnightKing}}}$, with {\sf DW} and {\sf n2v}, task: link prediction
        \label{Dist_generality}
      }
\end{minipage}
\end{table*}

\subsection{Efficiency due to Individual Parts of DistGER}
\label{sec:individual}
\spara{Random walk and training efficiency.}
To evaluate the system design of {\sf DistGER} (\S \ref{sec:DistGER}, \S \ref{sec:learning}),
we first compare the efficiency of random walks and training with those of {\sf KnighKing} and {\sf HuGE-D}.
For random walks (Figure~\ref{Dist_efficiency_sampling_training_MPGP}(a)),
{\sf DistGER} significantly outperforms {\sf KnightKing} and {\sf HuGE-D} on all our graph
datasets, achieving an average speedup of $3.32\times$ and $3.88\times$, respectively.
Although {\sf HuGE-D} implements information-oriented random walks on {\sf KnightKing},
due to additional computation and communication overheads during on-the-fly information
measurements (\S \ref{sec:HUGED}), its efficiency can be lower than that of {\sf KnightKing}.
We also notice that the random walk lengths ($L$) and the number of random walks ($r$) reduce (on average)
63.2\% and 18\%, respectively, in our information-oriented random walks, compared to {\sf KnightKing}'s
routine random walk configuration.

Another benefit of information-centric random walks is that it generates concise and effective corpus to improve 
training efficiency. Compared to {\sf KnightKing}, {\sf DistGER} achieves $17.37\times$-$27.95\times$ acceleration
in training over all our graphs. Next, considering the same corpus size, we compare the training efficiency of {\sf Pword2vec} and {\sf DSGL}
(trainer in {\sf DistGER}). Figure~\ref{Dist_efficiency_sampling_training_MPGP}(b) shows that {\sf DSGL} achieves $4.31\times$ average speedup
compared to {\sf Pword2vec}. We also notice that the average throughput (number of nodes processed per second) for {\sf DSGL} is up to 49.5 million/s,
while that of {\sf Pword2vec} is only up to 16.1 million/s. These results indicate that our distributed {\sf Skip-Gram} learning model (\S \ref{sec:learning})
is more efficient than {\sf Pword2vec}.
%

\spara{Partitioning efficiency.} Considering the 
randomness inherent in random walks, the partitioning scheme is critical to overall efficiency. 
For {\sf DistGER},
Figure~\ref{Dist_efficiency_sampling_training_MPGP}(c) exhibits that our multi-proximity-aware streaming graph partitioning ({\sf MPGP})
significantly reduces (avg. reduction $45\%$) the number of cross-machine messages than the workload-balancing partition of {\sf KnightKing}
on five graphs. Moreover, it improves the efficiency by 38.9\% for the random walking procedure
(Figure~\ref{Dist_efficiency_sampling_training_MPGP}(d)) over the same set of walks.
We report in Table~\ref{Partition_sechme_overhead}(a) the time required for graph partitioning in competing systems,
where {\sf DistDGL} uses the {\sf METIS} algorithm \cite{METIS_1998} for partitioning.
The results show that {\sf MPGP} performs partitioning with very little overhead in most cases, and
the partitioning efficiency is on average $25.1\times$ faster than competitors.
In Figure~\ref{Dist_MPaD_streaming}, we exhibit the distribution of local computations and cross-machine communications
on four machines for different streaming orders, and the top table reports their running times for partitioning and random walks.
For sequential {\sf MPGP}, we find that the {\sf DFS+degree}-based streaming order (\S \ref{sec:partition}) is more efficient than other streaming orders,
and it also strikes the best balance between cross-machine communications reduction and workload balancing.
Table~\ref{Partition_sechme_overhead}(b) exhibits the performance evaluation of {\sf parallel MPGP} on the small- ({\em LiveJournal}), medium- ({\em Com-Orkut}) and large-scale ({\em Twitter}) graphs. The results show that {\sf DFS+Degree} in {\sf parallel MPGP} is still the best or comparable in terms of partition time, due to the same reason as stated in our third optimization scheme (\S \ref{sec:partition}). On the other hand, {\sf BFS+Degree} in {\sf parallel MPGP} works the best in terms of random walk time due to preserving the locality of the graph structure (our fourth optimization scheme in \S \ref{sec:partition}).
We ultimately recommend {\sf BFS+Degree} for {\sf parallel MPGP}, since it reduces the partition time greatly, while the random walk time is comparable to that obtained from sequential {\sf MPGP}.
%
%
%
%
%
%
%
%

\subsection{Generality of DistGER}
\label{sec:generality}
%
To demonstrate the generality of {\sf DistGER}, we deploy {\sf Deepwalk} \cite{DeepWalk_2014}, {\sf node2vec} \cite{node2vec_2016} and {
\sf HuGE+} \cite{HuGE+_2022}
on {\sf DistGER}. While the original {\sf Deepwalk} and {\sf node2vec} follow
traditional random walks, in {\sf DistGER} the walk length and the number of walks are decided via information-centric measurements.
Next, we also deploy both {\sf Deepwalk} and {\sf node2vec} on {\sf KnightKing} which supports the routine configuration random walk.
Figure~\ref{Dist_generality} illustrates that {\sf DistGER} reduces the random walks time by 41.1\% and 51.6\% on average for
{\sf Deepwalk} and {\sf node2vec}, respectively. For training, {\sf DistGER} is on average $17.7\times$ and $21.3\times$ faster than {\sf KnightKing}+{\sf Pword2vec}
for {\sf Deepwalk} and {\sf node2vec}, respectively.
Moreover, we also show the {\em AUC} ratio of {\sf DistGER} and {\sf KnightKing}, considering {\sf Deepwalk} and {\sf node2vec}, for link prediction.
Our results depict that {\sf DistGER} has comparable (in most cases, higher) {\em AUC} scores, while it improves the efficiency significantly
even for traditional random walk-based graph embedding methods.
{\sf HuGE+} is an extension of {\sf HuGE}, and it uses the same {\sf HuGE} information-centric method to determine the walk length and the number of walks per node. Figure~\ref{Dist_generality} exhibits the compatibility of {\sf HuGE+} on {\sf DistGER} via its general API.
\section{Conclusions}
\label{sec:conclusions}
We proposed {\sf DistGER}, a novel, general-purpose,
distributed graph embedding framework with improved effectiveness, efficiency,
and scalability. {\sf DistGER} incrementally computes information-centric random walks
and leverages multi-proximity-aware, streaming, parallel graph partitioning to achieve high
local partition quality and excellent workload balancing.
{\sf DistGER} also designs distributed {\sf Skip-Gram} learning,
which provides efficient, end-to-end distributed support for node embedding learning.
Our experimental results demonstrated that {\sf DistGER} achieves much better efficiency and
effectiveness than state-of-the-art distributed systems {\sf KnightKing}, {\sf DistDGL}, and {\sf Pytorch-BigGraph},
and scales easily to billion-edge graphs, while it is also generic to support
traditional random walk-based graph embeddings.

\begin{acks}
 This work was supported by the NSFC (No. U22A2027, 61832020, 61821003).
 Arijit Khan acknowledges support from the Novo Nordisk Foundation grant NNF22OC0072415.
 Siqiang Luo acknowledges support from Singapore MOE Tier 2 (T2EP20122-0003), Tier-1 seed (RS05/21) and NTU SUG-NAP.
 Peng Fang acknowledges support from CSC under grant 202106160066.
\end{acks}


\bibliographystyle{ACM-Reference-Format}
\bibliography{ref}


\section{Additional Experimental Results}

\subsection{Directed vs undirected and weighted vs unweighted graphs}

{\sf DistGER} handles undirected and unweighted graphs by default, but can also support directed and weighted (higher edge weights denoting more connectivity strengths) graphs.
Empirically 
since 
FL, YT, LJ, OR, and TW - these widely used graphs are unweighted, following {\sf KnightKing} \cite{KnighKing_2019} we created their weighted versions by assigning edge weights as real numbers uniformly at random sampled from [1, 5), Table \ref{weigthed_unweighted} exhibits end-to-end running times of {\sf DistGER} for all graphs (both unweighted and weighted versions). The running times of weighted versions are slightly higher compared to unweighted versions.
\begin{table}[h]
\vspace{-3mm}
  \caption{End-to-end running times of {\sf DistGER} on unweighted and weighted graphs (seconds)}
  \label{weigthed_unweighted}
  \begin{center}
  \begin{tabular}{ccc}
   \hline
    \bf{End-to-end time} & \bf{Unweighted} & \bf{Weighted}\\
    \hline
    {\bf FL}    &10.038            &11.585\\

    {\bf YT}    &49.982            &52.981\\

    {\bf LJ}    &70.143            &72.598\\

    {\bf OR}    &233.096           &258.966\\

    {\bf TW}    &2779.802          &2890.743\\
  \hline
\end{tabular}
\end{center}
\end{table}

The evaluated graphs in our experiment already included both undirected (FL, YT, and OR) and directed (LJ and TW) graphs. 
Table \ref{directed_undirected} shows running times of various components of {\sf DistGER} on directed and undirected versions of the {\em LiveJournal} (LJ) dataset. Due to the difference in graph structures, 
we observed that the directed version requires more rounds of random walking per node ($r$ defined in Eq. 7) to achieve the convergence in effectiveness measurement (11 and 6 rounds in directed and undirected versions, respectively), and thus the sampling time in the directed version is more than that in the undirected version. In contrast, the directed version, due to less number of edges, has lower training time and also smaller memory footprint as the size of the generated corpus is smaller than that in the undirected version.
%

\begin{table}[h]
\newcommand{\tabincell}[2]{\begin{tabular}{@{}#1@{}}#2\end{tabular}}
\vspace{-2mm}
  \caption{Running times and memory usage of {\sf DistGER} on the directed and undirected versions of the {\em LiveJournal} dataset}
  \label{directed_undirected}
  \begin{center}
  \begin{tabular}{cccc}

    \hline
    { } & \bf{Undirected} &\bf{Directed} \\
    \hline
    {\bf \# Edges} &25,770,074  &1,460,813 \\
    {\bf Partitioning time} &21.257 s  &20.092 s  \\
    {\bf Sampling time} & 24.062 s & 30.790 s  \\
    {\bf Training time} & 9.663 s & 7.005 s \\
    {\bf Memory usage}  &7.43 GB &5.23 GB\\
  \hline
\end{tabular}
\end{center}
\vspace{-6mm}
\end{table}

\subsection{Memory usage}

We evaluated the memory usage of the sampling and training procedure for {\sf DistGER} on each machine (from a cluster of 8 machines) as shown in Table ~\ref{Memory_usage}. Compared with the same type of frameworks, {\sf KnightKing} and {\sf HuGE-D}, {\sf DistGER} requires fewer memory resources during sampling and training. While for the multi-relations- and graph neural network-based frameworks {\sf Pytorch-BigGraph} and {\sf DistDGL} in Table ~\ref{Memory_usage}(b), {\sf DistGER} is comparable and in most cases better than the competitors on the five real-world graphs.

\begin{table}[h!]
\newcommand{\tabincell}[2]{\begin{tabular}{@{}#1@{}}#2\end{tabular}}
  \caption{Avg. memory footprint (GB) of {\sf DistGER} and {\sf competitors} on each machine, where $\sigma$ is the standard deviation, and ``$-$'' means the framework fails under constraints (computation resources or time cost \textgreater $1$ day).}
  \label{Memory_usage}
  \begin{center}
  \renewcommand\arraystretch{1.5}
  \tabcolsep=0.08cm
  \scriptsize
 \begin{tabular}{cccccc}
    \hline
    \multicolumn{6}{c}{\small \bfseries{ (a) Sampling procedure for the random walk-based frameworks}}\\
    \hline
    {\bf{Graph}} &\sf{KnightKing} & \multicolumn{2}{c}{\sf{HuGE-D}} & \multicolumn{2}{c}{\sf{DistGER}} \\
    \hline
     \bf{FK} & 0.66($\pm$0.06)	&  \multicolumn{2}{c}{0.65($\pm$0.03)}     &  \multicolumn{2}{c}{\bf{0.41($\pm$0.02)}}\\

     \bf{YT} &4.11($\pm$0.55)	&  \multicolumn{2}{c}{3.18($\pm$0.39)}     &  \multicolumn{2}{c}{\bf{1.36($\pm$0.23)} }	\\

     \bf{LJ} & 7.65($\pm$0.82)	&  \multicolumn{2}{c}{4.07($\pm$0.72)}     &  \multicolumn{2}{c}{\bf{1.95($\pm$0.16)}	}\\

     \bf{CO} &10.98($\pm$1.03)	&  \multicolumn{2}{c}{6.43($\pm$0.26)}     &  \multicolumn{2}{c}{\bf{3.27($\pm$0.79)}} 	\\

     \bf{TW} & $-$	 	        &  \multicolumn{2}{c}{29.52($\pm$5.96)}     &  \multicolumn{2}{c}{\bf{20.18($\pm$3.62)}} 	\\
  \hline
    \multicolumn{6}{c}{\small \bfseries{ (b) Training procedure for \sf{{DistGER}} and \sf{competitors.}}}\\
    \hline
    {\bf{Graph}} &\sf{KnightKing} & \sf{HuGE-D} &\sf{DistDGL} &\sf{PBG}  & \sf{DistGER}\\
    \hline
     \bf{FK} &1.31($\pm$0.17)	&0.96($\pm$0.10)	&4.91($\pm$0.41)	&8.92($\pm$0.93)    & \bf{0.86($\pm$0.06)} \\

     \bf{YT} &4.73($\pm$0.72) 	&4.41($\pm$0.58)	&5.19($\pm$0.22)	&9.69($\pm$0.95)    & \bf{4.26($\pm$0.63)} 	 \\

     \bf{LJ} & 6.38($\pm$0.97)	&5.63($\pm$0.43)	&7.96($\pm$0.38)	&10.26($\pm$0.87)   & \bf{5.49($\pm$0.85)}	\\

     \bf{CO} &8.52($\pm$1.01)	&7.08($\pm$1.09)	&15.27($\pm$1.12)	&11.37($\pm$1.39)    & \bf{6.86($\pm$0.69) } 	\\

     \bf{TW} & $-$	            &73.72($\pm$7.07)	& $-$	            &\bf{31.65($\pm$4.61)}     & {67.16($\pm$5.18)} \\
  \hline
\end{tabular}
\end{center}
\end{table}

\subsection{Varying $\gamma$}

\begin{figure}[h!]
  \centering
  \includegraphics[width= 3.2 in]{./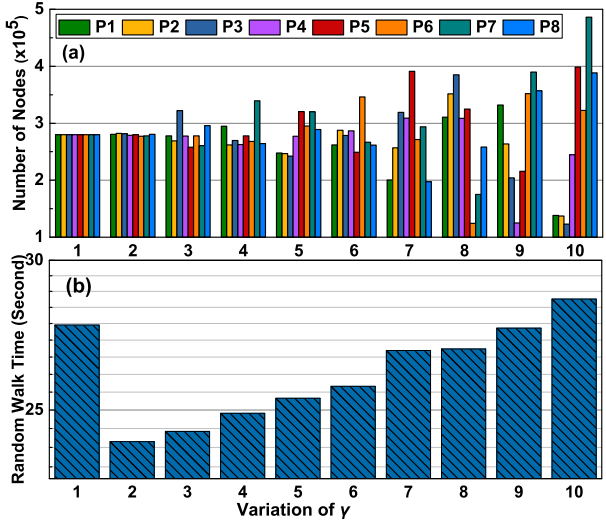}
  \caption{(a) Number of assigned nodes to each partition and (b) average random walk time with varying $\gamma$}
  \label{MPGP_gama}
\end{figure}

$\gamma$ is a slack parameter that allows deviation from the exact load balancing. Figure \ref{MPGP_gama} exhibits how different $\gamma$ affects the quality and load balancing on {\em LiveJournal} (LJ) dataset. We observed that although the partition scheme produces a strict load-balancing at $\gamma=1$, the contribution to the random walk is less due to the low utilization of local partitions, while a larger $\gamma$ (such as $\gamma=10$) introduces skewed workloads for computing machines, which also affects the partition quality and increases the average random walk time. Thus, there is a trade-off between partition quality and load-balancing. We set $\gamma=2$ since it produces the minimum average random walk time.


\subsection{DistGER combined with GPUs}

Each machine in the cluster is equipped with a GPU (NVIDIA GeForce RTX 3090, 24GB memory). We deployed {\sf DistGER}'s learner on GPUs using CUDA V11.2.152.
In Table \ref{DistGER_GPU}, we found that this GPU version (called  {\sf DistGER-GPU}) does not provide a significant improvement compared to {\sf DistGER}. For small graphs, {\sf DistGER-GPU} attains minor speedups; while for large graphs (e.g., {\em Twitter}), the performance is far from impressive due to higher memory consumption of training (refer to Table \ref{Memory_usage}), which exceeds the memory capacity of GPUs, resulting in large data movements between main memory and GPUs.
As we discussed in related work, computing gaps between CPUs and GPUs and the limited memory of GPUs plague the efficiency of graph embedding. 

\begin{table}[h!]
\vspace{-3mm}
  \caption{{\sf DistGER} training time (seconds) in combination with GPU ({\sf DistGER-GPU}) on four computing nodes}
  \label{DistGER_GPU}
  \begin{center}
  \begin{tabular}{ccc}
   \hline
    \bf{Graph} & \bf{DistGER} & \bf{DistGER-GPU}\\
    \hline
    {\bf FL}    &1.791            &0.653\\

    {\bf YT}   &27.529            &20.451\\

    {\bf LJ}    &29.791            &27.835\\

    {\bf OR}    &51.959           &46.341\\

    {\bf TW}   &299.896          &390.081\\
  \hline
\end{tabular}
\end{center}
\vspace{-6mm}
\end{table}

\end{document}